\documentclass[runningheads]{llncs}

\usepackage{xcolor}
\usepackage{stmaryrd}
\usepackage{amssymb}
\usepackage{amsmath}
\usepackage{graphicx}
\usepackage{float}
\usepackage{hyperref}

\newcommand{\thybox}[1]{%
  {\setlength{\fboxsep}{0pt}\fbox{\includegraphics[width=\dimexpr\textwidth-2\fboxrule\relax,%
     height=.84\textheight,keepaspectratio]{#1}}}}
\newcommand{\thymbox}[1]{%
  {\setlength{\fboxsep}{0pt}\fbox{\includegraphics[width=\dimexpr\textwidth-2\fboxrule\relax,%
     height=.85\textheight,keepaspectratio]{#1}}}}
\newcommand{\thyfig}[2]{%
  \begin{figure}[tp]\centering\thymbox{#1}\caption{#2}\label{fig:#1.thy}\end{figure}}

\newcommand{\thyfigH}[2]{%
  \begin{figure}[H]\centering\thybox{#1}\caption{#2}\label{fig:#1.thy}\end{figure}}
\newcommand{\thyfigpartH}[3]{%
  \begin{figure}[H]\centering\thybox{#1_#2}\caption{#3}\label{fig:#1.thy.#2}\end{figure}}

\newif\ifextended
\extendedtrue
\ifextended
  \newcommand{\extonly}[1]{#1}%      shown only in the extended version
  \newcommand{\procorext}[2]{#1}%    extended: first argument
\else
  \newcommand{\extonly}[1]{}%        hidden in the proceedings version
  \newcommand{\procorext}[2]{#2}%    proceedings: second argument
\fi

\let\oldthebibliography\thebibliography
\renewcommand{\thebibliography}[1]{%
  \oldthebibliography{#1}%
  \setlength{\itemsep}{0pt}%
  \setlength{\parsep}{0pt}%
  \setlength{\topsep}{0pt}%
}

\begin{document}
\sloppy

%% Keep the ORCID numbers visible but render them a touch smaller and unbreakable,
%% so the two-author line fits on a single line (llncs' default full-size
%% superscripts overflow into a second line).
\renewcommand{\orcidID}[1]{\unskip$^{[\mbox{\tiny #1}]}$}

\title{Monadic Second-Order Logic in HOL: Deep and Shallow with Automated Faithfulness\extonly{\\(Extended Preprint)}}
\titlerunning{Monadic Second-Order Logic in HOL}

\author{Christoph Benzm\"uller\inst{1,2}\orcidID{0000-0002-3392-3093}\thanks{Dedicated to David Basin, whom I first met in Saarbr\"ucken in the mid-1990s, over a shared enthusiasm for automated reasoning, formal methods, and expressive logics.} \and
        Daniel Kirchner\inst{1}\orcidID{0000-0001-9229-1148}}
\authorrunning{C. Benzm\"uller \& D. Kirchner}
\institute{University of Bamberg, Bamberg, Germany\\
  \email{christoph.benzmueller@uni-bamberg.de}, \email{daniel.kirchner@uni-bamberg.de}
  \and Freie Universit\"at Berlin, Berlin, Germany}

\maketitle

\begin{abstract}
In Isabelle/HOL, we apply the deep-and-shallow embedding methodology of our
prior work to \emph{monadic second-order logic} (MSO). Three embeddings are
developed side by side: a \emph{deep} embedding (an inductive datatype with an
explicit satisfaction relation); a \emph{maximal-shallow} embedding that
translates the connectives and quantifiers directly into HOL, carrying the
interpretation and both assignments explicitly; and a
\emph{minimal-shallow} embedding --- a locale that fixes those parameters,
collapsing the formula type to $\mathit{bool}$. The enabling new ingredient is a
\emph{two-sorted} substitution apparatus --- capture-avoiding substitution,
renaming, and a substitution lemma per namespace --- in which each binder is
transparent for the other; faithfulness of all three embeddings is mechanised
and largely automated. Our central contribution is a fully mechanised two-sorted downward L\"owenheim--Skolem
theorem: the minimal embedding recovers deep validity relative to the
(countable) assignment ranges, and this range-relative reading is shown to
coincide with the general (Henkin-style) reading of MSO, whereas the standard
reading validates strictly more formulas, witnessed by comprehension. Both readings are
nonetheless recovered from the minimal embedding, differing only in the admitted
interpretations: all of them for the general reading, only the
\emph{elementary substructures} of the full model for the standard. We exercise the
embeddings on classical MSO landmarks: the Boolean-closure and graph schemata
hold under the full second-order domain yet fail in the minimal embedding,
making the dichotomy concrete, while reachability and $2$-colorability are
refuted throughout.
\keywords{Monadic second-order logic \and Deep and shallow embedding
\and Isabelle/HOL \and Henkin semantics
\and Automated faithfulness}
\end{abstract}

%% =================================================================
\section{Motivation and introduction}\label{sec:intro}
%% =================================================================

Monadic second-order logic (MSO) sits at the heart of a body of
results connecting logic, automata, and graph/tree structure
theory~\cite{Buchi1960,Thomas1997,Courcelle2012}. It extends first-order
relational logic with a single, but powerful, additional resource:
quantification over \emph{monadic} predicates, i.e.\ over sets of individuals.

On the algorithmic side, a substantial line of work led by David Basin
approaches MSO through automata and decision procedures: Basin and Klarlund use
the monadic second-order logic of finite strings, M2L(Str), and its canonical
string automata (the Mona system) to specify and verify hardware, obtaining a
decision procedure with counter-model generation~\cite{BasinKlarlund1995,BasinKlarlund1998}, and Ayari
and Basin develop bounded model construction for monadic second-order
logics~\cite{AyariBasin2000}. That line targets decidable fragments over strings
and trees; the present work is complementary --- it embeds full and general
(Henkin-style) MSO
into classical higher-order logic for interactive proof and metatheory, rather
than deciding a fixed fragment.

Embeddings in classical higher-order logic
(HOL)~\cite{church1940} are a versatile host for non-classical and expressive
logics: treating HOL as a universal meta-logic, a wide range of object logics
--- modal, deontic, epistemic, and others --- inherit its semantic
infrastructure and proof automation. This is the basis of \emph{LogiKEy}~\cite{ManyLogics2026,J48}, a logic-pluralistic
knowledge representation methodology in HOL, from computational metaphysics to
ethical and legal reasoning; it grew out of the early modal-logic
embeddings developed together with Paulson~\cite{J21,J23}.

\emph{Shallow} embeddings, which represent object-level formulae directly as
HOL terms (for MSO, by the standard translation of connectives and
quantifiers), give immediate access to Isabelle/HOL's proof automation --- the
prover \texttt{sledgehammer}, the finite-countermodel finder \texttt{nitpick}, and
\texttt{auto}. \emph{Deep} embeddings,
in which formulae are elements of an inductively defined datatype, are needed
for metalogical reasoning about syntax, substitution, or proof systems.

Shallow semantical embeddings of propositional and quantified modal logics in
HOL were developed by Benzm\"uller and Paulson~\cite{J21,J23}, initially without
a deep embedding or a mechanised faithfulness proof. We have recently
shown~\cite{C98} that \emph{both} embeddings can be carried out simultaneously,
in a single HOL theory, with mutual faithfulness proofs automated. That
development, however, only treats propositional modal logic, whose binders are
modal operators carrying no variable binding. The same methodology now spans a
range of logics; we develop it here for monadic second-order logic (MSO). We
take up MSO not for expressive strength but for the phenomena of its second-order
quantifier --- above all the standard-versus-general (Henkin) distinction behind
our central result --- and, on the technical side, for its two-sorted binding:
quantifying over individuals and sets of individuals at once, MSO's first- and
second-order existentials are the two binders our apparatus must handle.

\paragraph{Contributions.}
We contribute \textsf{MSOinHOL}, an Isabelle/HOL development of MSO with binary
relational atoms (no primitive equality), monadic predication atoms, the propositional connectives, a
first-order existential quantifier, and a monadic second-order existential quantifier.
The particular contributions include:
\begin{enumerate}
\item A \emph{deep} embedding (datatype $\mathcal{F}$) and two \emph{shallow}
ones: a maximal embedding carrying explicit dependencies on the interpretation,
the two domains, and the two assignments; and a minimal embedding that fixes the
interpretation and assignments as locale parameters, collapsing the formula type
to $\mathit{bool}$.
\item The minimal embedding is an Isabelle/HOL \emph{locale} \texttt{MinS};
it admits a faithfulness theorem \textsf{FaithfulMS\_all}
stating that quantifying over all locale interpretations recovers exactly deep
validity relative to the assignment ranges. Faithfulness of the maximal-shallow
embedding with respect to the deep one (\textsf{FaithfulSD}) is likewise
established and automated.
\item A \emph{two-sorted} substitution apparatus for the deep embedding ---
free/bound/fresh predicates, capture-avoiding substitution, alphabetic renaming,
the substitution lemma, and size-based induction --- developed once per variable
namespace, with each binder \emph{transparent} for the other.
\item Experiments confirming tautologies, the monadic \emph{comprehension}
schema (proved for an arbitrary formula, generalising two ad-hoc instances), and
\texttt{nitpick} countermodels for invalid schemata.
\item A suite of \emph{classical MSO landmarks} --- Boolean
closure~\cite{Buchi1960,Thomas1997}, monadic graph
operations~\cite{Courcelle2012}, reachability~\cite{BasinKlarlund1995}, and
$2$-colorability~\cite{Thomas1997} --- across all three embeddings: the closure
and graph schemata hold under the full second-order domain but \emph{fail} in the
general minimal embedding (complement and intersection with \emph{potential} \texttt{nitpick} countermodels,
the rest failing alike; recoverable via \textsf{Deep'\_to\_MinS}), while
reachability and $2$-colorability are refuted throughout.
\item The converse: range-relative validity \emph{coincides} with general
(Henkin-style) MSO validity, by a two-sorted L\"owenheim--Skolem construction, while the
standard reading validates strictly more formulas --- all formally proved.
\end{enumerate}

\paragraph{Organisation.}
\S\ref{sec:hol} fixes notation for HOL, \S\ref{sec:mso} recalls MSO, and
\S\ref{sec:embeddings} presents the three embeddings. \S\ref{sec:subst} outlines
the two-sorted substitution machinery, \S\ref{sec:faithful} the faithfulness
theorems, \S\ref{sec:surj} the surjectivity problem, and \S\ref{sec:exp} the
experiments; \S\ref{sec:concl} concludes and discusses related work. The complete
formal development is the Archive of Formal Proofs entry
\textsf{MSOinHOL}~\cite{MSOinHOL-AFP}.
\procorext{Rendered theory sources appear in App.~\ref{app:src} and the full
two-sorted L\"owenheim--Skolem proof in App.~\ref{app:proof}.}{An extended
version of this paper, containing the full two-sorted L\"owenheim--Skolem proof
and the rendered theory sources as appendices, is available on
arXiv~\cite{msoinhol-ext}.}

%% =================================================================
\section{Classical higher-order logic}\label{sec:hol}
%% =================================================================

We adopt the presentation of HOL given in~\cite{C98} without modification:
simply typed $\lambda$-calculus over base types~$o$ (truth values) and~$D$
(individuals), Henkin semantics, and the standard denotation
$\llbracket\cdot\rrbracket$ derived from a frame $\mathcal{D}$ and an
interpretation~$I$. HOL's proof calculi are sound and complete with respect to Henkin
models~\cite{henkin1950},\footnote{The landscape of
standard models, Henkin models, and Andrews' weaker $v$-complexes, and the role
of extensionality in fixing the class of Henkin models, is analysed by
Benzm\"uller, Brown, and Kohlhase~\cite{J6}; the shallow embeddings below inherit HOL's functional and
Boolean extensionality.} and Isabelle/HOL~\cite{Isabelle} mechanises
a fragment of HOL sufficient for all that follows. In
particular, the second-order domain of MSO is the HOL function type
$D\Rightarrow o$, so monadic sets (resp.\ predicates) are first-class HOL objects.

%% =================================================================
\section{Monadic second-order logic (MSO)}\label{sec:mso}
%% =================================================================
The syntax of MSO, given a (nonempty) signature $R$ of binary relation
symbols, a denumerable set $V$ of individual (first-order) variables, and a
denumerable set $V_2$ of monadic (second-order) variables, is fixed by the
grammar
\begin{equation}\label{eq:syntax}
\varphi,\psi := r(x,y) \mid X(x) \mid \neg\varphi \mid \varphi\wedge\psi
                \mid \exists x.\varphi \mid \exists X.\varphi
\end{equation}
for relation symbols $r\in R$, individual variables $x,y\in V$, and monadic predicate variables $X\in V_2$; the atom $X(x)$ asserts that the monadic predicate $X$ holds of the individual $x$ (extensionally, that $x$ lies in the set denoted by $X$). The further connectives $\vee$, $\supset$,
$\leftrightarrow$, the first-order universal $\forall x$, and the
second-order universal $\forall X$ are defined as usual by duality.

A \emph{model} is a tuple $\langle I,D,E\rangle$, where $D$ is a nonempty domain of individuals, $E$ is a nonempty collection of \emph{admissible} monadic predicates over $D$ (the second-order domain; extensionally a family $E\subseteq\mathcal{P}(D)$ of subsets), and the
interpretation $I$ assigns to each relation symbol $r$ a binary relation
$I(r)\subseteq D\times D$. A \emph{first-order assignment} $g\colon V\to D$
maps each individual variable to a domain element, and a \emph{second-order assignment} $G\colon V_2\to E$ maps each monadic predicate variable to an admissible predicate in~$E$; $g[x\leftarrow d]$ and $G\langle X\leftarrow S\rangle$ denote the obvious updates.

Truth in a model is given recursively by
\begin{align*}
\langle I,D,E\rangle,g,G &\models r(x,y)
        && \text{iff } I(r)(g(x),g(y))\\
\langle I,D,E\rangle,g,G &\models X(x)
        && \text{iff } G(X)\ \text{holds of}\ g(x)\\
\langle I,D,E\rangle,g,G &\models \neg\varphi
        && \text{iff not } \langle I,D,E\rangle,g,G \models \varphi\\
\langle I,D,E\rangle,g,G &\models \varphi\wedge\psi
        && \text{iff } \langle I,D,E\rangle,g,G \models \varphi
            \text{ and } \langle I,D,E\rangle,g,G \models \psi\\
\langle I,D,E\rangle,g,G &\models \exists x.\varphi
        && \text{iff } \exists d\in D.\, \langle I,D,E\rangle,g[x\leftarrow d],G \models \varphi\\
\langle I,D,E\rangle,g,G &\models \exists X.\varphi
        && \text{iff } \exists S\in E.\, \langle I,D,E\rangle,g,G[X\leftarrow S] \models \varphi
\end{align*}
A formula~$\varphi$ is \emph{valid} iff $\langle I,D,E\rangle,g,G\models\varphi$
for every model, every $g$ into~$D$, and every $G$ into~$E$. When $E=\mathcal{P}(D)$ (every monadic predicate is admissible) the second-order quantifier is the
\emph{standard} (full) one; restricting $E$ yields the various
\emph{general} (Henkin-style) readings. The principle bridging the two is
\emph{comprehension} --- that every property definable by a formula determines an
admissible set; it holds by fiat when $E=\mathcal{P}(D)$ but can fail once $E$ is
restricted. This is the \emph{monadic} fragment: only monadic predicate variables (not relation variables) may be quantified, and a unary property of individuals is rendered directly as a monadic predicate applied to its argument.

\paragraph{Scope.}
For minimality we restrict the first-order signature to
\emph{binary} relation symbols and omit primitive equality; unary properties are rendered directly as monadic predications $X(x)$. Thus the formalisation covers the equality-free binary-relational fragment of MSO, which suffices for all results below.

%% =================================================================
\section{Three embeddings of MSO in HOL}\label{sec:embeddings}
%% =================================================================

We now present the three embeddings side by side, after fixing the
preliminaries shared by all of them.

\subsection{Preliminaries}\label{sec:prelim}

Preliminaries shared by all embeddings (theory
\texttt{MSOinHOL\_preliminaries.thy}\extonly{, Fig.~\ref{fig:MSOinHOL_preliminaries.thy}})
declare the type $D$ (individuals) and the type synonyms
$R:=\mathit{nat}$ (relation symbols), $V:=\mathit{nat}$ (first-order
variables), $V_2:=\mathit{nat}$ (second-order variables), together with
$\mathcal{R}$, $\mathcal{I}$, the first-order assignment type
$\mathcal{E}:=V\Rightarrow D$, the second-order assignment type
$\mathcal{G}:=V_2\Rightarrow(D\Rightarrow\mathit{bool})$, and the two
domain-restriction types $\mathcal{D}:=D\Rightarrow\mathit{bool}$ and
$\mathcal{P}:=(D\Rightarrow\mathit{bool})\Rightarrow\mathit{bool}$. Two
variable-assignment update operators are introduced: $g[x\leftarrow d]$ for
the first-order assignment and $G\langle X\leftarrow S\rangle$ for the
second-order one, each with its companion lemmata
(\texttt{L1}--\texttt{L4} and \texttt{M1}--\texttt{M4}). The
set-as-predicate operations $\sqsubseteq$, $\sqcup$, \texttt{Range},
\texttt{Univ}, \texttt{into}, \texttt{onto} are polymorphic and hence shared
between both sorts.

\subsection{Deep embedding}\label{sec:deep}

The deep embedding (theory \texttt{MSOinHOL\_deep.thy},
Fig.~\ref{fig:MSOinHOL_deep.thy}) is a single inductive datatype
$\mathcal{F}$ whose constructors mirror the grammar~\eqref{eq:syntax}:
\texttt{AtmD} (relational atom $r^d(x,y)$), \texttt{PrdD} (monadic predication atom $X^{d}(x)$), \texttt{NegD}, \texttt{AndD}, the first-order
\texttt{ExD} ($\exists^d x.\varphi$), and the second-order \texttt{ExD2}
($\exists^d_2 X.\varphi$). The relative-truth predicate
$\langle I,D,E\rangle,g,G \models^d \varphi$ is declared in mixfix syntax and
defined by primitive recursion: the predication clause evaluates the monadic
predicate $G(X)$ at the individual $g(x)$; the first-order existential ranges over the explicit domain~$D$;
the second-order existential ranges over the explicit collection~$E$ of
admissible sets; the rest is standard. Validity \texttt{ValD} closes
universally over models, over domain-respecting first-order assignments
(\texttt{g into D}), and over domain-respecting second-order assignments
(\texttt{G into E}). An auxiliary full-domain notion $\models^{d'}$ fixes $D=\mathtt{Univ}$ and
$E=\mathtt{Univ}$; it is the \emph{standard} reading, the target of \S\ref{sec:surj},
and a convenience for \texttt{nitpick}; since $\models^d\varphi$ implies $\models^{d'}\varphi$,
any counterexample for $\models^{d'}$ refutes $\models^d$ as well.

\paragraph{Reading the clauses.}
The two existential clauses are the only places where the model is consulted
nontrivially: each quantifies a witness over its domain ($d$ over~$D$, $S$
over~$E$) and re-evaluates under the corresponding assignment update. All
remaining connectives ($\vee$, $\supset$, $\leftrightarrow$, and the two
universals) are definitional abbreviations, unfolded automatically through the
rewrite bag \texttt{DefD}; the same discipline (\texttt{DefS}, \texttt{DefM})
governs the two shallow embeddings, so the three notions of validity differ only
in their primitives.

\thyfig{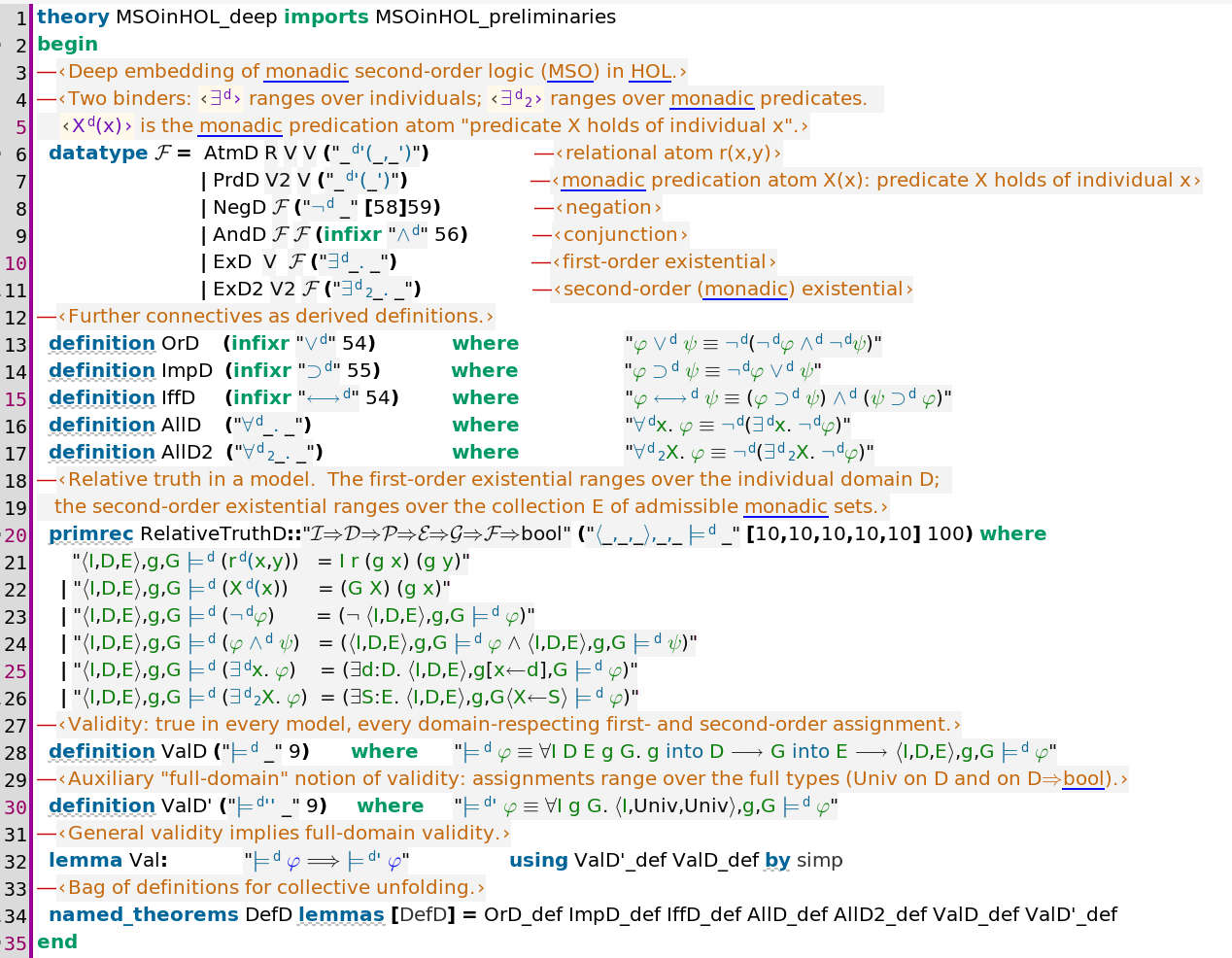}{Deep embedding of MSO in HOL: the datatype $\mathcal{F}$ with the predication atom and the two binders, the derived connectives, the relative-truth predicate \texttt{RelativeTruthD}, and the validity definitions \texttt{ValD}, \texttt{ValD'}.}

\subsection{Maximal shallow embedding}\label{sec:maxshallow}

The maximal shallow embedding (theory \texttt{MSOinHOL\_shallow.thy},
Fig.~\ref{fig:MSOinHOL_shallow.thy}) lifts every primitive of MSO to a
$\lambda$-term abstracted over the explicit dependencies $I$, $D$, $E$, $g$,
$G$. The type of formulae is
$\sigma := \mathcal{I}\Rightarrow\mathcal{D}\Rightarrow\mathcal{P}\Rightarrow
\mathcal{E}\Rightarrow\mathcal{G}\Rightarrow\mathit{bool}$. Atoms consult the
interpretation; the predication case applies the second-order assignment; the
first-order existential pairs a meta-level HOL existential with a first-order
assignment update; and the second-order existential pairs a meta-level HOL
existential with a second-order assignment update. The pattern is identical to
that of~\cite{C98}; the only difference is that the formula type $\sigma$
carries two assignments and two domain restrictions, because both the
first-order (individual) and the second-order (set) quantifier must be
accommodated. We write $\models^s$ for the resulting validity and $\models^{s'}$
for its full-domain variant ($D=\mathtt{Univ}$ and $E=\mathtt{Univ}$), the maximal-shallow analogues of
$\models^d$ and $\models^{d'}$.

\thyfig{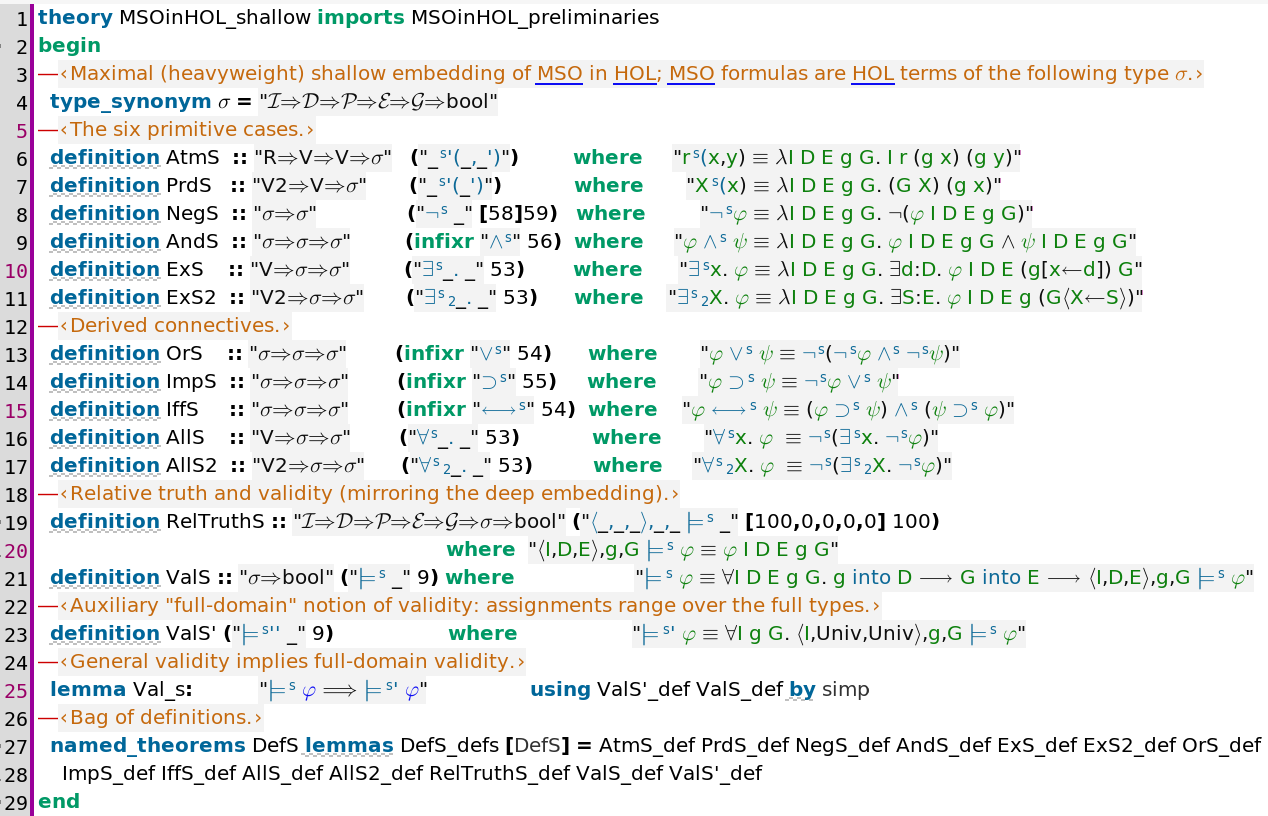}{Maximal shallow embedding
(\texttt{MSOinHOL\_shallow.thy}): the formula type $\sigma$, the six primitive
cases (the two existentials thread $g$ and $G$ respectively), the derived
connectives, and validity.}

\subsection{Minimal shallow embedding (locale-based)}\label{sec:minshallow}

The maximal-shallow type carries every dependency as an explicit argument,
threaded throughout because each quantifier \emph{updates} an assignment in
its recursive call ($g[x\leftarrow d]$ for $\exists x$, $G\langle X\leftarrow
S\rangle$ for $\exists X$). The minimal embedding avoids this: each MSO
existential is encoded by HOL's own binder over the matching variable
namespace ($V$, $V_2$), and atoms read their variables' denotations from the
assignment ($X^{m}(x)$ is $(GG\,X)(gg\,x)$), so a bound variable is a symbol
looked up in the assignment rather than a witness installed into an updated
one. The recursion therefore never changes $II$, $gg$, or $GG$; this, with
the absence of any world dependency, lets us promote them to locale
parameters, and the formula type reduces to $\mathit{bool}$.

We package the minimal embedding as an Isabelle/HOL \emph{locale}
\texttt{MinS}~\cite{ballarin2014} (theory
\texttt{MSOinHOL\_\allowbreak shallow\_\allowbreak minimal\_\allowbreak locale.thy},
Fig.~\ref{fig:MSOinHOL_shallow_minimal_locale.thy}), fixing $(II,gg,GG)$.
The six primitive cases are \texttt{AtmM}, \texttt{PrdM} (predication $X^{m}(x)$ defined as $(GG\,X)(gg\,x)$), \texttt{NegM},
\texttt{AndM}, the first-order binder \texttt{ExM} over~$V$, and the
second-order binder \texttt{ExM2} over~$V_2$. As the formula type is
$\mathit{bool}$, relative truth and validity \texttt{ValM} (written $\models^m$) are the identity.
Both binders are declared as HOL
\emph{binders}, so that \texttt{sledgehammer} treats second-order MSO goals
as ordinary HOL goals: bound first- and second-order variables become
HOL-level bound variables of types $V$ and $V_2$, and the external provers
reason about them directly without unfolding the deep-embedding syntax.

For practical experimentation the locale is also instantiated once at the
global level (theory \texttt{MSOinHOL\_shallow\_minimal.thy}%
\extonly{, Fig.~\ref{fig:MSOinHOL_shallow_minimal.thy}}): three
uninterpreted constants \texttt{II}, \texttt{gg}, \texttt{GG} are introduced
and a \texttt{global\_interpretation} re-issues the locale notation at the
global level. The locale formulation remains the single source of all
theorems; the global interpretation propagates them to the constants without
proof duplication.

\thyfig{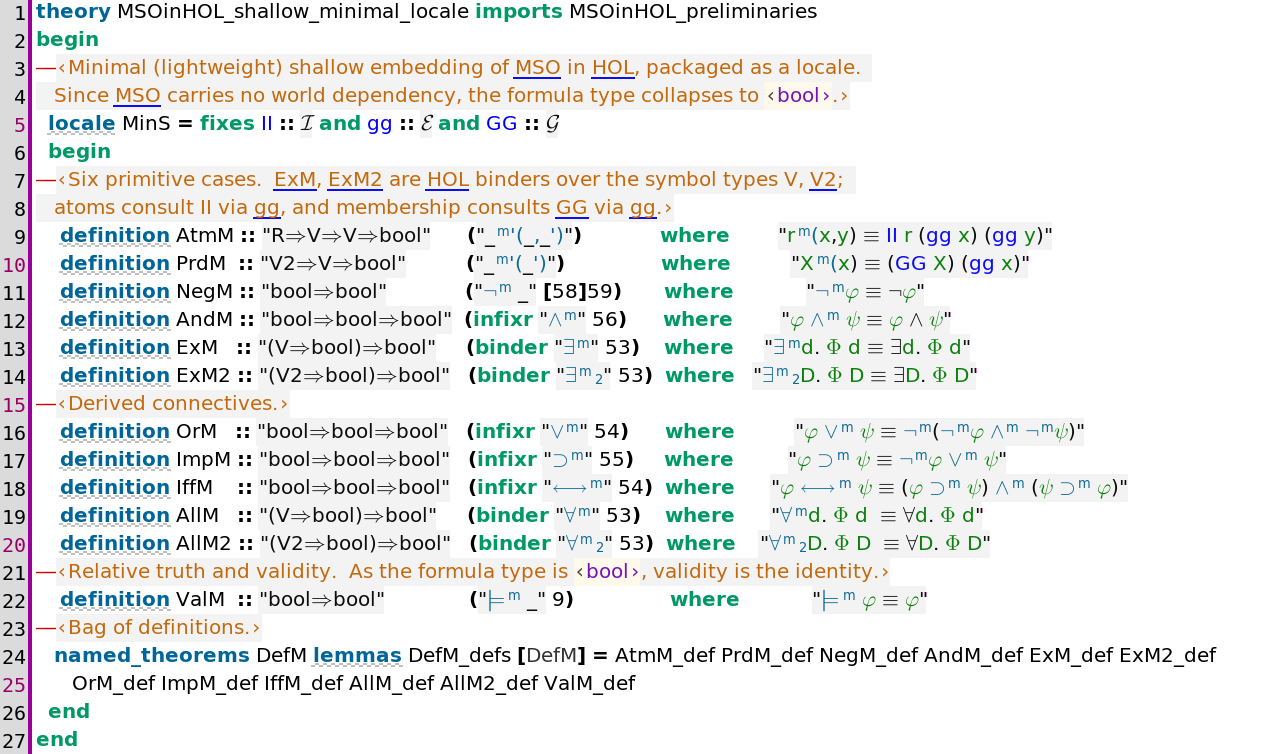}{The locale \texttt{MinS}
(\texttt{MSOinHOL\_shallow\_minimal\_locale.thy}) packaging the minimal shallow
embedding: parameters $II$, $gg$, $GG$; the six primitive cases including the
HOL binders \texttt{ExM} (over $V$) and \texttt{ExM2} (over $V_2$); the derived
connectives; and validity.}

\paragraph{Maximal vs.\ minimal: the trade-off.}
The two shallow embeddings encode the same object logic but package the semantic
parameters oppositely. The \emph{maximal} embedding keeps the interpretation and
both assignments explicit in each translated formula (type~$\sigma$): the
parameters are under direct control and the translation lines up term-for-term
with the deep embedding, which makes it the natural \emph{faithfulness bridge}
(\textsf{FaithfulSD}) and the place to express the general reading $\models^{s}$
and its full-domain standard variant $\models^{s'}$; the price is weight, as every subterm threads
$I,D,E,g,G$. The \emph{minimal} embedding hides those parameters in the locale
\texttt{MinS}, so a translated formula becomes an \emph{ordinary HOL proposition}
of type~$\mathit{bool}$ with each MSO binder a native HOL binder:
\texttt{sledgehammer} and \texttt{nitpick} act on it directly, which makes it the
convenient vehicle for downstream reasoning and for comparing classes of
interpretations --- general vs.\ standard (\S\ref{sec:surj}) --- at the cost of
leaving the semantic parameters implicit.

%% =================================================================
\section{Two-sorted substitution machinery}\label{sec:subst}
%% =================================================================

Substitution lemmas are needed for the deep$\leftrightarrow$minimal-shallow
faithfulness proof, to handle the two existential-quantifier cases. Since MSO
has two variable namespaces, we develop \emph{two} parallel substitution
apparatuses, \emph{Part~A} (first-order) and \emph{Part~B} (second-order), in
theory \texttt{MSOinHOL\_deep\_subst\_lemma.thy}%
\extonly{ (Figs.~\ref{fig:MSOinHOL_deep_subst_lemma.thy.1}--\ref{fig:MSOinHOL_deep_subst_lemma.thy.3})}.
We use named variables and explicit capture-avoiding substitution rather than a
nominal or de~Bruijn infrastructure, keeping the two namespaces and their mutual
transparency visible in the formal text.

\paragraph{The two apparatuses.}
Each part provides the standard components --- free/bound/fresh predicates, a
variable-for-variable substitution \texttt{Subst} (written
$[x{\leftarrow}z](\varphi)$; second-order \texttt{Subst2}), a substitutability
predicate, the \textsf{SubstitutionLemma} (resp.\ \textsf{SubstitutionLemma2},
whose semantic side is the update $G\langle X\leftarrow G\,Z\rangle$), and an
$\alpha$-renaming \texttt{ren\_for\_subst} composing into the safe substitution
\texttt{ren\_subst}, written $[x\!\leftarrow_r\!z](\varphi)$. The size-based
induction principles \texttt{SInduct} and \texttt{QInduct} recurse on \emph{both}
binders. The outcome used downstream is \texttt{L27}/\texttt{N27} (safe
substitution commutes with assignment update); \texttt{L29}/\texttt{N29} record
the resulting equation for surjective assignments --- a deep existential is a HOL
existential over the variable symbols --- exactly the shape the minimal
translation realises.

\paragraph{Transparency across namespaces.}
A binder of one sort contributes \emph{nothing} to the machinery of the other:
binding no variable of that namespace, it descends through every predicate,
substitution, renaming, and induction by a single uniform clause, structurally
identical to the negation clause and dispatched by the same
\texttt{auto}/\texttt{simp}. This holds \emph{twice over} --- the second-order
binder \texttt{ExD2} is transparent for Part~A and the first-order \texttt{ExD}
for Part~B --- the only genuinely new leaf being the predication atom $X(x)$,
which contributes a free first-order occurrence ($x$) to Part~A and a free
second-order one ($X$) to Part~B. The $\alpha$-renaming is what keeps
substitution sound across a binder: replacing $x$ by $z$ in
$\exists^d z.\,r^d(x,z)$ would \emph{capture} the free $z$, which
\texttt{ren\_subst} prevents by first renaming the bound $z$ to a fresh symbol.
We additionally isolate the single-binder semantic core as
\textsf{rename\_eval}/\textsf{rename\_eval2} (theory
\texttt{MSOinHOL\_subst\_extras.thy}%
\extonly{, Fig.~\ref{fig:MSOinHOL_subst_extras.thy}}).

%% =================================================================
\section{Faithfulness, automated}\label{sec:faithful}
%% =================================================================

An embedding is \emph{faithful} when validity transfers in both directions ---
a formula is valid in MSO exactly when its HOL translation is --- so the
embedding neither loses nor invents theorems; we establish this, largely automatically, for all three.

Two mappings translate deep formulae into shallow ones (theory
\texttt{MSOinHOL\_\allowbreak faithfulness\_\allowbreak locale.thy},
Fig.~\ref{fig:MSOinHOL_faithfulness_locale.thy}): \texttt{DpToShS} (deep to
maximal-shallow), written
$\llbracket\cdot\rrbracket$, and \texttt{DpToShM} (deep to minimal-shallow),
written
$(\!\lvert\cdot\rvert\!)$. The first recurses transparently through every
primitive, since both source and target use named-variable binders for both
sorts. The second, declared inside the locale \texttt{MinS}, bridges the two
named-variable binders of the deep embedding to the two HOL binders of the
minimal embedding: its first-order existential clause inserts a first-order
safe substitution $[v\!\leftarrow_r\!d](\varphi)$, and its second-order
existential clause inserts a second-order safe substitution
$[V\!\leftarrow_{r2}\!D](\varphi)$.

\paragraph{Pointwise and validity-level theorems.}
The deep$\leftrightarrow$maximal-shallow correspondence \textsf{FaithfulSDlem}
is derived by a single induction on~$\varphi$ with
\texttt{auto simp:\,DefS\,DefD}; its validity-level form \textsf{FaithfulSD} ---
the statement for closed validity rather than for a fixed model and
assignment --- then follows by unfolding the two validity definitions
\texttt{ValD\_def} and \texttt{ValS\_def}. The
deep$\leftrightarrow$minimal-shallow correspondence is first shown pointwise,
\textsf{FaithfulMDlem}, by a single \texttt{induct} with rule
\texttt{QInduct}: \texttt{simp add:\,DefD\,DefM} discharges the propositional
and atomic cases, and the two existential cases are closed by \texttt{blast}
after the safe-substitution lemmata \texttt{L27} and \texttt{N27} rewrite the
translated binders. As the minimal validity \texttt{ValM} is the identity,
\textsf{FaithfulMD} coincides with \textsf{FaithfulMDlem}, and
\textsf{FaithfulMS} follows by composition with \textsf{FaithfulSDlem}.

\paragraph{Faithfulness across all locale interpretations.}
The locale form admits the all-interpretations theorem \textsf{FaithfulMS\_all}: a
formula holds in the minimal embedding under \emph{every} interpretation
$(II,gg,GG)$ of \texttt{MinS} iff it holds, in the deep embedding, in every
model whose first- and second-order domains are exactly the ranges of the
assignments,
\[
\bigl(\forall II\,gg\,GG.\ \models^m (\!\lvert\varphi\rvert\!)\bigr)
\;=\;
\bigl(\forall I\,g\,G.\ \langle I,\mathrm{Range}\,g,\mathrm{Range}\,G\rangle,g,G \models^d \varphi\bigr).
\]
The proof is a single \texttt{simp} with \textsf{FaithfulMD}. The forward
direction toward full deep validity, \textsf{Deep\_to\_MinS}, holds outright:
a deep-valid formula transfers to every minimal interpretation. The converse
is exactly the surjectivity problem --- whether the countable assignment ranges
can be made to exhaust the full second-order domain --- to which we now turn.

\thyfig{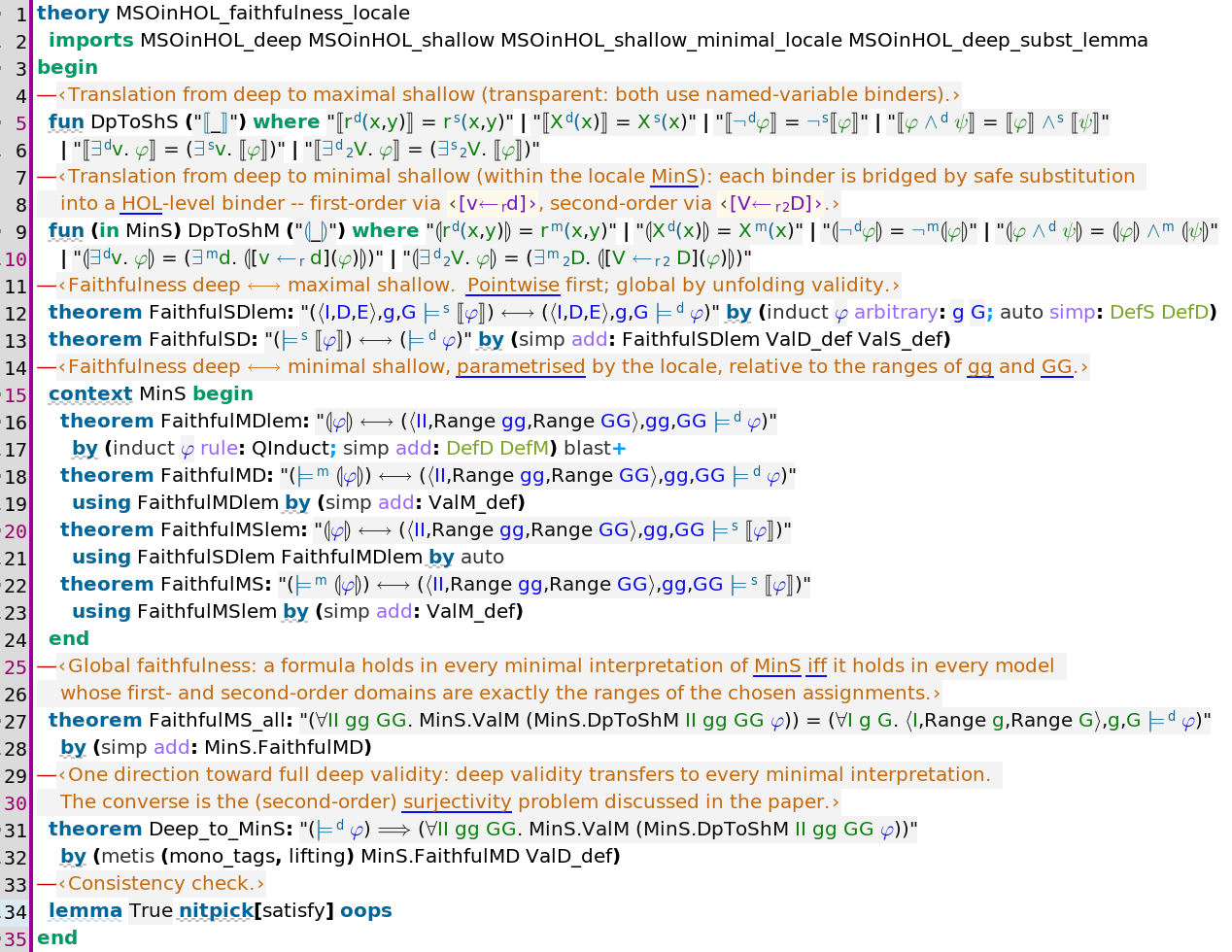}{Faithfulness: the translations $\llbracket\cdot\rrbracket$ and $(\!\lvert\cdot\rvert\!)$ (the latter bridging \emph{both} binders via the two safe substitutions), the deep$\leftrightarrow$maximal-shallow theorems, the locale-based deep$\leftrightarrow$minimal-shallow correspondences, the all-interpretations theorem \textsf{FaithfulMS\_all}, and the one-directional bridge \textsf{Deep\_to\_MinS}.}

\paragraph{What \textsf{FaithfulMS\_all} buys.}
\textsf{FaithfulMS\_all} makes the minimal embedding a \emph{proof}
device, not merely a definitional convenience. To establish a deep validity one
discharges the corresponding goal in the minimal embedding --- a plain HOL
formula with two ordinary binders, on which \texttt{sledgehammer} and
\texttt{auto} operate directly --- and the theorem transports the result to the
deep syntax over the range-determined model class; conversely, any genuine countermodel to the minimal goal refutes the deep
formula over that class (\texttt{nitpick}'s finite models are here only
\emph{potential}, \S\ref{sec:exp-classic}). The only gap is the passage from the range-determined model class to
the full general (Henkin-style) class, which \S\ref{sec:surj} closes; for the
standard reading it shows that unrestricted all-interpretations faithfulness is
impossible, while an elementary-substructure restriction recovers it exactly.

%% =================================================================
\section{The surjectivity problem, second-order}\label{sec:surj}
%% =================================================================

In the minimal-shallow embedding the first-order binder ranges over the
namespace $V=\mathit{nat}$ through $gg\colon V\to D$, with effective range
$\mathrm{Range}\,gg$; the second-order binder ranges over $V_2=\mathit{nat}$
through $GG\colon V_2\to(D\Rightarrow\mathit{bool})$, with effective range $\mathrm{Range}\,GG$.
Both ranges are countable, whereas the deep embedding lets the existentials
range over the full domains $D$ and $E$, so faithfulness against the full
domains does not follow from \textsf{FaithfulMD}. By \textsf{FaithfulMS\_all} the
missing converse is precisely
\[
  \bigl(\forall I\,g\,G.\ \langle I,\mathrm{Range}\,g,\mathrm{Range}\,G\rangle,g,G\models^d\varphi\bigr)
  \ \Longrightarrow\ \models^d\varphi .
\]

The decisive observation is that the general validity $\models^d$ imposes
\emph{no} closure on the admissible collection $E$: it quantifies over arbitrary
families with $G\ \mathtt{into}\ E$. That is exactly the \emph{general}
(Henkin-style) reading of MSO,\footnote{\label{fn:hierarchy}The reading $\models^d$ is the class of
structures $\langle I,D,E\rangle$ with an \emph{arbitrary} admissible family
$E\subseteq\mathcal P(D)$ and no closure condition; terminology for it is not
uniform. Shapiro~\cite{Shapiro1991} calls these \emph{Henkin models}, reserving
\emph{faithful Henkin model} for one that additionally satisfies comprehension and
choice, and \emph{full model} for the standard case $E=\mathcal P(D)$.
V\"a\"an\"anen~\cite{VaananenSEP} instead calls the comprehension-closed
structures \emph{Henkin models} and the unrestricted ones \emph{general models};
Manzano~\cite{Manzano1996} calls the unrestricted ones \emph{frames} and the
comprehension-closed ones \emph{general structures}, after Henkin's~\cite{henkin1950}
original \emph{general models}, which are closed under definability. We write
``general (Henkin-style)'' for $\models^d$ to stay neutral: it is Shapiro's
\emph{Henkin} class, V\"a\"an\"anen's \emph{general} class, and Manzano's
\emph{frames}. On every convention
the levels nest as full/standard $\subseteq$ comprehension-closed
(\emph{faithful Henkin}/\emph{Henkin}) $\subseteq$ general, with $\models^d$ the
widest; the diagonal comprehension instance separates $\models^d$ from the
comprehension-closed level, while the categoricity phenomenon separates that level
from the full one. Via the reduction of this semantics to many-sorted first-order
logic (Shapiro~\cite[Ch.~4]{Shapiro1991}; Manzano~\cite{Manzano1996}),
$\models^d$ inherits the downward L\"owenheim--Skolem theorem directly.} which is nothing but
two-sorted first-order logic over the
structure with individuals as one sort, the admissible sets as a second sort,
the relation symbols as binary relations on the first sort, and a membership
relation $d\in S \Leftrightarrow S(d)$ between the sorts. First-order logic has
the downward L\"owenheim--Skolem property; the apparent ``second-order''
difficulty is an artefact of conflating $\models^d$ with the \emph{standard}
reading $\models^{d'}$ treated below.

\paragraph{The converse (range-relative $\Rightarrow$ deep validity) holds, for the general reading.}
Together with \textsf{Deep\_to\_MinS}, this gives the following exact
characterisation of range-relative validity.
\begin{theorem}\label{thm:rangehenkin}
For every formula $\varphi$, range-relative validity coincides with general
\textup{(}Henkin-style\textup{)} deep validity:
\[
  \bigl(\forall I\,g\,G.\ \langle I,\mathrm{Range}\,g,\mathrm{Range}\,G\rangle,g,G\models^d\varphi\bigr)
  \quad\Longleftrightarrow\quad \models^d\varphi .
\]
Equivalently, via \textsf{FaithfulMS\_all}, the minimal-shallow embedding is
faithful \emph{exactly} to general \textup{(}Henkin-style\textup{)} MSO validity.
\end{theorem}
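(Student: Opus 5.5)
The direction $\Longleftarrow$ is immediate. A range-relative model $\langle I,\mathrm{Range}\,g,\mathrm{Range}\,G\rangle$ is a particular model, and $g$ (resp.\ $G$) trivially maps into its own range. So $\models^d\varphi$ specialises to every such triple; this is the content of \textsf{Deep\_to\_MinS} combined with \textsf{FaithfulMS\_all}. The substance is $\Longrightarrow$. Fix an arbitrary model $\langle I,D,E\rangle$ with $g$ into $D$ and $G$ into $E$. We must show $\langle I,D,E\rangle,g,G\models^d\varphi$. Since $V=V_2=\mathit{nat}$ are countably infinite, we construct assignments $g^*\colon V\to D$ and $G^*\colon V_2\to E$ with three properties. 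First, $g^*$ and $G^*$ agree with $g$ and $G$ on the free variables of $\varphi$; since only finitely many variables are free, all others may be reassigned. Second, $\mathrm{Range}\,g^*$ and $\mathrm{Range}\,G^*$ form a two-sorted elementary substructure $D_0\subseteq D$, $E_0\subseteq E$. Third, that substructure satisfies the Tarski--Vaught condition for $\models^d$. Given these, the range-relative hypothesis at $(I,g^*,G^*)$ together with a coincidence lemma yields the goal.

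The construction is a downward L\"owenheim--Skolem closure. Start with the countable seed $\{g\,x\mid x\in\mathrm{FV}(\varphi)\}$ and $\{G\,X\mid X\in\mathrm{FV}_2(\varphi)\}$, and close under witnesses. At each stage $n$ we have countable sets $D_n\subseteq D$ and $E_n\subseteq E$. For every formula $\psi$ and every assignment $h,H$ with values in $D_n,E_n$ that differ from a fixed base assignment on only finitely many variables, we proceed as follows. If $\langle I,D,E\rangle,h,H\models^d\exists^d x.\psi$, a witness $d\in D$ is chosen by Hilbert choice (\texttt{SOME}) and added to $D_{n+1}$. If the formula is $\exists^d_2 X.\psi$, a witness $S\in E$ is added to $E_{n+1}$. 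Countably many formulas and finitely-supported assignments keep every stage countable. The unions $D_\omega$ and $E_\omega$ are countable and nonempty. We then enumerate them by $\mathit{nat}$ so that they are hit surjectively by the non-free variables, keeping the free variables fixed. This enumeration gives $g^*$ and $G^*$ with the required ranges.

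The key lemma is proved by the size-based induction \texttt{QInduct}. It states that for every $\psi$ and all $h,H$ into $D_\omega,E_\omega$,
\[
\langle I,D_\omega,E_\omega\rangle,h,H\models^d\psi \iff \langle I,D,E\rangle,h,H\models^d\psi .
\]
Atoms, predication, negation and conjunction are immediate, since the clauses do not consult the domains. In the $\exists^d x$ case, the right-to-left direction uses the witness closure and left-to-right uses $D_\omega\subseteq D$. The $\exists^d_2 X$ case is symmetric via $E_\omega\subseteq E$; this relies on $E$-witnesses being genuine elements of $E$, not arbitrary sets, and needs no comprehension because $\models^d$ imposes none. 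Finally, apply the hypothesis at $(I,g^*,G^*)$, transfer to $\langle I,D,E\rangle$ by the lemma, and replace $g^*,G^*$ by $g,G$ using a coincidence lemma on free variables. That coincidence lemma is available from the Part~A/Part~B free-variable machinery.

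I expect the main obstacle to be the formal bookkeeping of the closure rather than any mathematics. Hilbert choice over pairs (formula, finite-support assignment) must be arranged so that the stages are countable, and countability and surjective enumeration must be established in Isabelle (e.g.\ via \texttt{countable} and \texttt{from\_nat}). A simplification I would try first is to work only with assignments $h$ of the form $g^*[\ldots]$ with values already enumerated. In that variant the closure is indexed by $\mathit{nat}$-coded triples: formula, variable and stage index. An alternative is to reduce to a deep assignment over the full namespace and dovetail witness production with the enumeration, interleaving the two steps.
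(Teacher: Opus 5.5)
Your overall route is the paper's: a two-sorted Tarski--Vaught hull whose set-witnesses are drawn from $E$ itself, surjective reindexing over the infinitely many non-free variables, and a coincidence lemma. You argue directly where the paper argues contrapositively, and you transfer at $g^*,G^*$ in the big model rather than at $g,G$ with reindexing in the small one; neither difference matters. What does matter is your choice of seed, which breaks the construction.

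You start from $\{g\,x\mid x\in\mathrm{FV}(\varphi)\}$ and $\{G\,X\mid X\in\mathrm{FV}_2(\varphi)\}$. This is empty in a sort whenever $\varphi$ has no free variable of that sort, for instance for every sentence such as $\exists^d x.\,r^d(x,x)$, or for any $\varphi$ without free second-order variables. Every closure step needs a pair $(h,H)$ with values in the current stages. So if either seed is empty, no such pair exists and no witness is ever added. The corresponding union then stays empty, and no total $g^*$ or $G^*$ can have it as range. For the same reason your ``fixed base assignment'' into the stages cannot exist. Your claim that $D_\omega$ and $E_\omega$ are nonempty is therefore false in exactly these cases.

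The fix is the paper's: seed with $\mathrm{Range}\,g$ and $\mathrm{Range}\,G$. These are countable because $V=V_2=\mathit{nat}$, and nonempty because the assignments are total; seeding just $g\,0$ and $G\,0$ would also do. With the full ranges you can moreover take $g,G$ themselves as the base assignment.

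A smaller point concerns the scope of your closure. It supplies witnesses only for finite-support assignments into a single stage $D_n,E_n$. Your key lemma, however, quantifies over all $h,H$ into $D_\omega,E_\omega$, and it must, since $g^*$ is surjective onto $D_\omega$. The right-to-left $\exists$ cases therefore need the coincidence lemma already inside the induction:
\begin{itemize}
\item pad $h,H$ to the base outside $\mathrm{FV}(\exists^d x.\,\psi)$, so that the padded pair lands in some single stage $D_n,E_n$;
\item take the witness $d\in D_{n+1}$ supplied for the padded pair;
\item return to $h[x\leftarrow d],H$ by agreement on $\mathrm{FV}(\psi)$.
\end{itemize}
The paper's ``parameter tuples drawn from the current stage'' absorb this bookkeeping. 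Plain structural induction suffices here; \texttt{QInduct} is unnecessary, since $\models^d$ is defined by assignment update rather than substitution.
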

\noindent The proof (\procorext{App.~\ref{app:proof}}{extended
version~\cite{msoinhol-ext}}) is a two-sorted downward
L\"owenheim--Skolem construction. From a countermodel one extracts a
\emph{countable} elementary sub-model --- one satisfying exactly the same
formulae --- containing the assignment ranges by the Tarski--Vaught test (a
witness-reflection criterion that guarantees elementarity), drawing
set-witnesses honestly from $E$ --- so that no
comprehension is assumed and the sub-model stays inside the general class --- and
then reindexes the countably many variables to surject onto the countable
sub-domains while fixing the finitely many free variables of $\varphi$, using a
coincidence lemma. The second-order domain $E$ is never replaced by a powerset;
comprehension-closure is neither assumed nor produced: the hull inherits
exactly the closure properties, if any, of the ambient~$E$.

\paragraph{Unrestricted minimal interpretations do not capture the standard reading.}
No L\"owenheim--Skolem argument can strengthen this to the \emph{standard}
validity $\models^{d'}$ (where $E=\mathcal P(D)$), because the two notions
genuinely differ. Writing $\models^d_r$ for range-relative validity --- the
left-hand side of Theorem~\ref{thm:rangehenkin}, there identified with
$\models^d$ --- we have
\[
  \models^d_r \;=\; \models^d \;\subsetneq\; \models^{d'} .
\]
The inclusion $\models^d\subseteq\models^{d'}$ is immediate --- the standard
model is itself one of the general models quantified over by $\models^d$, which
is the content of the lemma we name \textsf{Val} --- while its strictness is
witnessed by comprehension: the
instance $\exists X.\,\forall x.\,(X(x)\leftrightarrow P(x,x))$ is
$\models^{d'}$-valid (\textsf{comprehension\_atom}) but not $\models^d$-valid:
\textsf{Standard\_strictly\_stronger} exhibits an explicit general model whose
restricted $E$ omits the diagonal set $\{d : P(d,d)\}$. In the three-level hierarchy
of Footnote~\ref{fn:hierarchy}, this instance separates the \emph{general} reading
$\models^d$ from the comprehension-closed \emph{Henkin} reading, and hence
\emph{a fortiori} from the full reading $\models^{d'}$; the residual gap, between
the Henkin and the full reading, reflects the familiar \emph{categoricity}
phenomenon for second-order logic --- the full reading pinning models down up to
isomorphism. We do not formalise this second gap; our machine-checked results
concern the embeddings, their faithfulness, and the L\"owenheim--Skolem connection
between the range-relative and general readings, while the separation
$\models^d\subsetneq\models^{d'}$ from the standard reading is witnessed here by
the simpler diagonal comprehension instance above.

The analysis thus resolves into a dichotomy: the
converse is a \emph{theorem} for the general (Henkin-style) reading and a provable
\emph{non-reduction} for the standard reading. The supplementary theories
\texttt{MSOinHOL\_\allowbreak lowenheim\_\allowbreak skolem\_\allowbreak lemmas.thy}
and \texttt{MSOinHOL\_\allowbreak lowenheim\_\allowbreak skolem.thy} carry this out
--- the coincidence lemma \textsf{coincidence}, the surjective reindexing
\textsf{reindex}, the assembled converse \textsf{RangeValid\_imp\_ValD} (with
corollaries \textsf{RangeValid\_iff\_ValD} and \textsf{Faithful\_to\_Henkin}), the
negative \textsf{Standard\_strictly\_stronger} with its explicit diagonal
countermodel, and the two-sorted elementary-substructure hull \textsf{ls\_hull},
a countable Tarski--Vaught hull \textsf{skolem\_hull} whose truth-preservation
lemma \textsf{truth\_pres} drives the two-sorted construction --- and, built on the
same hull, the elementary-substructure relation \textsf{ElementarySubstructure}, the
first-class downward L\"owenheim--Skolem theorem \textsf{DownwardLowenheimSkolem}, and
the standard-reading characterisation \textsf{Deep'\_to\_MinS}. All are complete
structured proofs, machine-checked under Isabelle2025-2; the
construction is given in \procorext{App.~\ref{app:proof}}{the extended
version~\cite{msoinhol-ext}}.

\paragraph{The standard reading, captured exactly by elementary interpretations.}
The non-reduction says only that no \emph{range-relative} --- equivalently,
all-interpretations --- faithfulness can reach $\models^{d'}$. It does not say the
standard reading escapes the minimal embedding: it does not. Writing
$\langle I',D',E'\rangle\subseteq_{\!E}\langle I,D,E\rangle$ for the two-sorted
\emph{elementary substructure} relation \textsf{ElementarySubstructure} ($D'\sqsubseteq D$,
$E'\sqsubseteq E$, and the two models satisfy the same formulae under every assignment
landing in the smaller pair), the development now isolates the Tarski--Vaught hull as a
\emph{first-class} downward L\"owenheim--Skolem theorem \textsf{DownwardLowenheimSkolem}:
every countable nonempty seed sub-pair of a model extends to a countable elementary
substructure. Using it, the strongest faithfulness theorem \textsf{Deep'\_to\_MinS}
characterises the standard reading \emph{exactly},
\begin{multline*}
  \models^{d'}\varphi
  \ \Longleftrightarrow\
  \forall II\,gg\,GG.\\
  \langle II,\mathrm{Range}\,gg,\mathrm{Range}\,GG\rangle
  \subseteq_{\!E}\langle II,\mathrm{Univ},\mathrm{Univ}\rangle
  \ \longrightarrow\ \models^m(\!\lvert\varphi\rvert\!),
\end{multline*}
where $\langle II,\mathrm{Univ},\mathrm{Univ}\rangle$ is the full model ($E=\mathcal P(D)$).
Both readings of MSO are therefore recovered from the minimal embedding, differing only in the admissible
interpretations: $\models^d$ quantifies over \emph{all} minimal interpretations
(Theorem~\ref{thm:rangehenkin}, via \textsf{FaithfulMS\_all}), whereas $\models^{d'}$ quantifies
over the \emph{elementary} ones --- those whose countable range model is an elementary substructure
of the full model. The strict inclusion $\models^d\subsetneq\models^{d'}$ is exactly the statement
that this elementary restriction cannot be dropped, the diagonal comprehension instance witnessing
it. Concretely, any interpretation refuting a closure schema in the minimal embedding
--- the (potential) \texttt{nitpick} models of \S\ref{sec:exp} included --- is
necessarily \emph{non-elementary}: its countable range omits a set the schema needs; on the elementary interpretations the same schemas
become provable, in agreement with their $\models^{d'}$-validity. (They are thus not genuine
counterexamples to the standard reading, but artefacts of admitting non-elementary ranges; under the elementary-substructure restriction, the locale \textsf{MinS\_ES\_Univ}, each
would be a theorem.)

\paragraph{A recursion-theoretic strengthening.}
The non-reduction just established is model-theoretic; classically, the full MSO
models --- those with $E=\mathcal{P}(D)$ --- do not even form an elementary
class, as downward L\"owenheim--Skolem fails for them (not formalised here). A complementary,
recursion-theoretic obstruction is that standard validity $\models^{d'}$ is not
recursively enumerable --- a classical consequence of the expressive strength of
standard second-order logic (cf.\ Shapiro~\cite{Shapiro1991},
V\"a\"an\"anen~\cite{Vaananen2011}), which we do not formalise here --- so it is
the consequence set of no recursively axiomatisable first-order theory. This non-enumerability needs both ingredients, the standard reading \emph{and}
quantification over \emph{all} interpretations of the relation symbols: the general
reading is enumerable by two-sorted first-order completeness, and restricting the standard reading to suitable structures such as $(\mathbb{N},<)$,
the infinite binary tree, or finite words and trees makes it decidable, hence
enumerable~\cite{Buchi1960,Thomas1997,Courcelle2012}.

%% =================================================================
\section{Experiments}\label{sec:exp}
%% =================================================================

We illustrate the development with two main test files. The core file
\texttt{MSOinHOL\_\allowbreak experiments.thy} (Fig.~\ref{fig:MSOinHOL_experiments.thy})
collects tautologies, comprehension, and \texttt{nitpick} disproofs; the larger
file \texttt{MSOinHOL\_\allowbreak experiments\_\allowbreak classic.thy} replicates a suite of classical
MSO landmarks across all three embeddings (\S\ref{sec:exp-classic}).
Propositional tautologies (proved in all three embeddings with the d/s/m
discipline), first-order tautologies such as
$(\forall x.\,P(x,x))\supset(\exists x.\,P(x,x))$, and the predication tautology $\forall x.\,(X(x)\supset X(x))$ are dispatched by \texttt{simp} or \texttt{auto}
after unfolding \texttt{DefD}, \texttt{DefS}, or \texttt{DefM}.

\paragraph{Monadic comprehension.}
The principal second-order validity we verify is monadic \emph{comprehension},
stated over
the full second-order domain ($\models^{d'}$, $E=\mathtt{Univ}$). In its general
form it is a schema, parametric in an arbitrary formula~$\varphi$ in which the
comprehension variable~$X$ does not occur free:
\[
  \models^{d'}\ \exists X.\,\forall x.\,(X(x) \leftrightarrow \varphi).
\]
The witnessing set is exactly the property defined by~$\varphi$, namely the HOL
predicate
$\lambda d.\,\langle I,\mathtt{Univ},\mathtt{Univ}\rangle,g[x{\leftarrow}d],G\models^{d}\varphi$;
it always lies in $E$ because $E=\mathcal{P}(D)$ in the standard semantics, and
the side condition on~$X$ lets the second-order irrelevance lemma discharge the
$X$-slot inside~$\varphi$. We record this as the theorem
\textsf{comprehension\_schema} (supplementary theory
\texttt{MSOinHOL\_comprehension.thy}%
\extonly{, Fig.~\ref{fig:MSOinHOL_comprehension.thy}}); the two instances proved directly in
\texttt{MSOinHOL\_\allowbreak experiments.thy} --- the set of $P$-self-related individuals,
$\exists X.\,\forall x.\,(X(x)\leftrightarrow P(x,x))$, with witness
$\lambda d.\,P\,d\,d$, and the universally-true predicate
$\exists X.\,\forall x.\,X(x)$, with witness $\lambda d.\,\mathit{True}$ --- are
its corollaries. No second-order search is required, because the comprehension
term is itself an ordinary HOL predicate.

\paragraph{Disproofs by \texttt{nitpick}.}
\texttt{nitpick} finds finite countermodels for the invalid schemata considered
here; these are exploratory searches, terminated with \texttt{oops} rather than stored as theorems.
``Every monadic predicate holds somewhere'', $\forall X.\,\exists x.\,X(x)$, fails by the empty predicate; the comprehension instance asking for a predicate that both holds and fails of every individual, $\exists X.\,\forall x.\,(X(x)\wedge\neg X(x))$, is unsatisfiable hence invalid; and symmetry of $P$,
$\forall x\,\forall y.\,(P(x,y)\supset P(y,x))$, fails by a two-element model.
Each disproof uses the full-domain relation $\models^{d'}$.

\thyfig{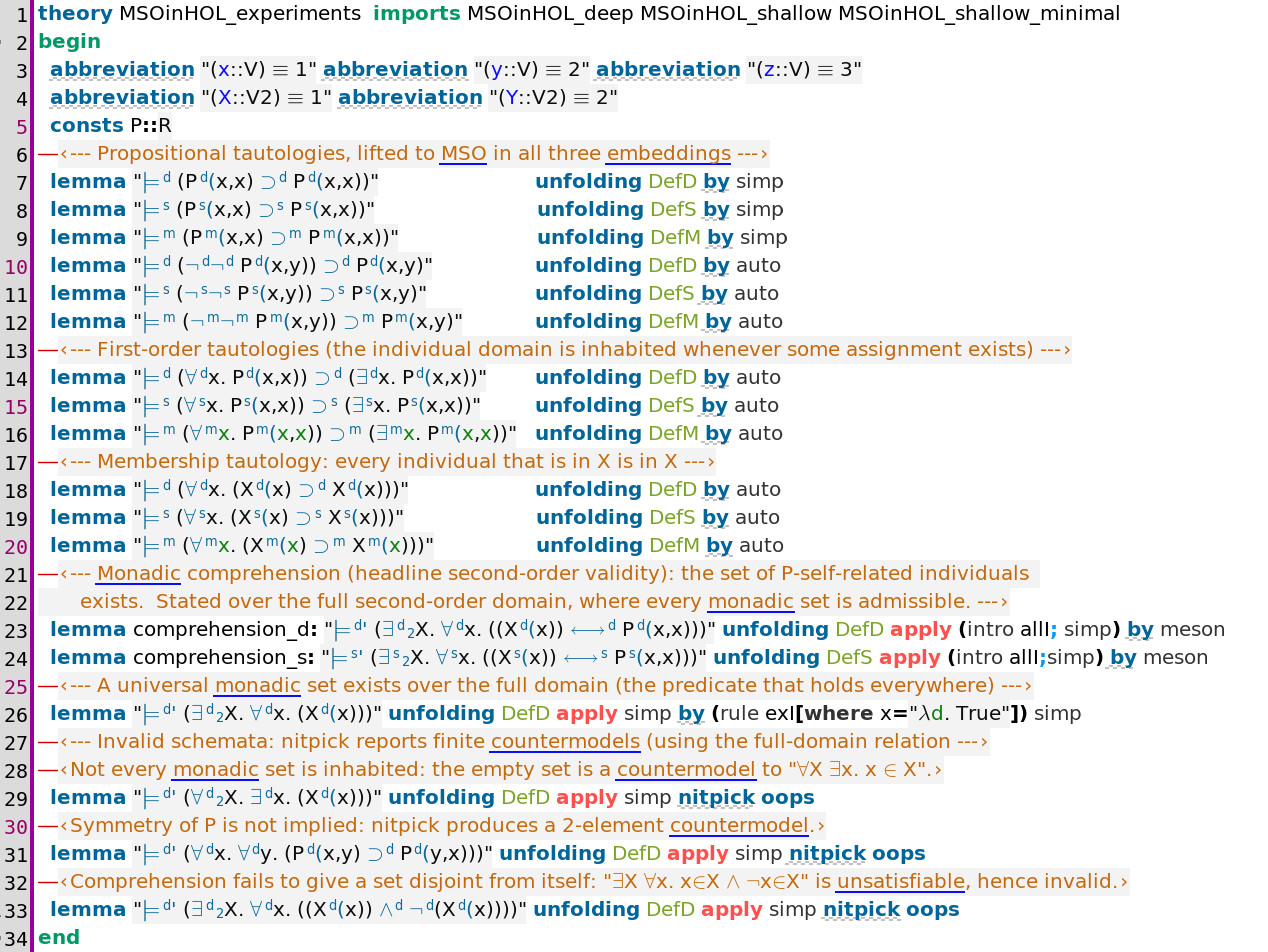}{Experiments: propositional and first-order tautologies in all three embeddings, the predication tautology, monadic comprehension and the universal-predicate validity over the full second-order domain, and \texttt{nitpick} disproofs of ``every predicate holds somewhere'', a self-contradictory predicate, and symmetry of $P$.}

\subsection{Classical MSO landmarks across the three embeddings}\label{sec:exp-classic}

A second, larger experiment file
(\texttt{MSOinHOL\_\allowbreak experiments\_\allowbreak classic.thy}%
\extonly{, Fig.~\ref{fig:MSOinHOL_experiments_classic.thy}})
revisits a sequence of textbook MSO
constructions and runs each one through all three embeddings. The point is not
the individual validities --- they are folklore --- but the \emph{pattern} of
their truth values across the five readings $\models^{d}$, $\models^{d'}$,
$\models^{s}$, $\models^{s'}$, $\models^{m}$, the operational face of the dichotomy
of \S\ref{sec:surj}; Table~\ref{tab:classic} collects the outcomes.

\begin{table}[t]
\centering
\caption{Classical MSO landmarks across the five readings. In the full-domain
columns $\models^{d'},\models^{s'}$ ($E=\mathtt{Univ}$) and the minimal column
$\models^{m}$, $\checkmark$ is a validity proof (an explicit second-order witness
for $\models^{d'}/\models^{s'}$), $\times$ a \texttt{nitpick}-found countermodel
(in the $\models^{m}$ column a \emph{potential} one, see below), and
``--'' a case not separately tested (failing as complement/intersection do). The
general readings $\models^{d},\models^{s}$ are not tested independently: they
coincide with $\models^{m}$ by \textsf{FaithfulSD} ($\models^{s}=\models^{d}$)
and Theorem~\ref{thm:rangehenkin} ($\models^{d}$ is the minimal embedding over
all interpretations); the full readings likewise agree,
$\models^{d'}=\models^{s'}$.}\label{tab:classic}
\setlength{\tabcolsep}{5pt}
\begin{tabular}{lccccc}
\hline
Schema (and witness, where valid) & $\models^{d}$ & $\models^{d'}$ & $\models^{s}$ & $\models^{s'}$ & $\models^{m}$\\
\hline
Complement\quad $\lambda d.\,\neg S(d)$ & $\times$ & $\checkmark$ & $\times$ & $\checkmark$ & $\times$\\
Intersection\quad $\lambda d.\,S_1(d)\wedge S_2(d)$ & $\times$ & $\checkmark$ & $\times$ & $\checkmark$ & $\times$\\
Union\quad $\lambda d.\,S_1(d)\vee S_2(d)$ & -- & $\checkmark$ & -- & $\checkmark$ & --\\
Separation\quad $\lambda d.\,S(d)\wedge I(P)(d,d)$ & -- & $\checkmark$ & -- & $\checkmark$ & --\\
$P$-image\quad $\lambda d.\,\exists e.\,S(e)\wedge I(P)(e,d)$ & -- & $\checkmark$ & -- & $\checkmark$ & --\\
$P$-preimage\quad $\lambda d.\,\exists e.\,I(P)(d,e)\wedge S(e)$ & -- & $\checkmark$ & -- & $\checkmark$ & --\\
Reachability $x\!\rightsquigarrow\!y$ & $\times$ & $\times$ & $\times$ & $\times$ & $\times$\\
Reflexive reachability $x\!\rightsquigarrow\!x$ & $\checkmark$ & $\checkmark$ & $\checkmark$ & $\checkmark$ & $\checkmark$\\
$2$-colorability & $\times$ & $\times$ & $\times$ & $\times$ & $\times$\\
\hline
\end{tabular}
\end{table}

\paragraph{Boolean closure and graph operations.}
The first six rows are instances of the comprehension schema above: each asserts
the set defined by a formula~$\varphi$, i.e.\ $\exists X.\,\forall d.\,(X(d)\leftrightarrow\varphi(d))$,
with witness $\lambda d.\,\varphi(d)$ (the predicate shown in the table). Over the full
second-order domain ($\models^{d'}/\models^{s'}$, $E=\mathcal P(D)$) each witness is an ordinary HOL predicate supplied by a single \texttt{exI}, for
the Boolean cases~\cite{Buchi1960,Thomas1997} and the monadic graph
operations~\cite{Courcelle2012} alike (see the table).
After unfolding \texttt{DefD} (resp.\ \texttt{DefS}) and $\beta$-reducing, the deep and maximal-shallow goals agree up to the assignment-update
definitions, so the two proof scripts are structurally identical --- the d/s
alignment made visible on non-trivial second-order content. In the minimal embedding $\exists^{m}_{2}$ ranges over
$V_2=\mathit{nat}$ through $GG$, so the required predicate need not lie in
$\mathrm{Range}\,GG$ and \texttt{nitpick} reports countermodels to \emph{complement} and
\emph{intersection} (the other four fail alike); as both binders
range over infinite namespaces, \texttt{nitpick} certifies these only as
\emph{potential} countermodels, the separation $\models^d\subsetneq\models^{d'}$ itself being mechanised via
the diagonal instance (\textsf{Standard\_strictly\_stronger}, \S\ref{sec:surj}).

\paragraph{Reachability and colorability.}
Reachability and $2$-colorability are invalid in MSO independently of the
standard/general choice, and fail in \emph{all} columns: reachability
\cite{BasinKlarlund1995} --- ``if $Z$ contains $x$ and is closed under
$P$-successors then $Z$ contains $y$'' --- has a \texttt{nitpick} countermodel
with $x,y$ $P$-disconnected (in $\models^{m}$ inferred via \textsf{Val} and Theorem~\ref{thm:rangehenkin}); its reflexive variant (conclusion $Z(x)$) follows from the first conjunct alone and needs no witness, so it survives even
the minimal embedding; $2$-colorability~\cite{Thomas1997} fails on any graph with a
$P$-self-loop (or, classically, an odd cycle such as $K_3$). The pattern --- closure rows valid under the full readings
$\models^{d'}/\models^{s'}$ but refuted under the general
$\models^{d}/\models^{s}/\models^{m}$, reachability and
$2$-colorability uniformly refuted --- is the inclusion $\models^d\subsetneq\models^{d'}$ of \S\ref{sec:surj}: among the closure rows, the minimal embedding refutes exactly the principles
separating the standard from the general reading; reflexive reachability holds
throughout.

\paragraph{Locale-based transfer.}
A short demonstration (theory
\texttt{MSOinHOL\_\allowbreak experiments\_\allowbreak locale.thy}%
\extonly{, Fig.~\ref{fig:MSOinHOL_experiments_locale.thy}}) proves a schema purely from
minimal-shallow definitions, inside the locale, and transfers it to the
range-relative deep embedding by a single application of
\texttt{MinS\_to\_Deep}, the directional reading of \textsf{FaithfulMS\_all}.

%% =================================================================
\section{Conclusion}\label{sec:concl}
%% =================================================================

\textsf{MSOinHOL} demonstrates that the deep-and-shallow methodology
of~\cite{C98} scales to a setting with \emph{two} binders over
different sorts. The new ingredients are a two-sorted substitution apparatus,
in which each binder is transparent for the other namespace, and a
locale-based minimal embedding with two HOL binders and an all-interpretations
faithfulness theorem. The
powerset character of the second-order domain turns the surjectivity problem into
a clean dichotomy. On one side, range-relative validity coincides with the general
(Henkin-style) reading of MSO --- a two-sorted L\"owenheim--Skolem theorem
(\S\ref{sec:surj}). On the other, the standard reading validates strictly more
formulas, yet it too is captured exactly by the minimal embedding once one restricts to
\emph{elementary} interpretations (\textsf{Deep'\_to\_MinS}). The two readings thus
share one translation and differ only in the class of interpretations quantified over. A suite of classical MSO landmarks turns this dichotomy into a table of validity proofs, \texttt{nitpick}-found
countermodels, and entailed cases.

The ingredients are individually classical: downward L\"owenheim--Skolem has been
formalised in Isabelle for first-order logic~\cite{FOL-Fitting}; the distinction
between the \emph{standard} and the \emph{general} (Henkin-style) reading of
second-order semantics is textbook~\cite{Vaananen2011}; and the general reading is known to be a two-sorted first-order
theory~\cite{Manzano1996}. Mechanised MSO in Isabelle has so far targeted decision procedures for finite
words and WS1S~\cite{TraytelNipkow2015,Traytel2015}; our focus, the semantics of
full and general MSO, is complementary. What is new, to our knowledge, is carrying a mechanised downward
L\"owenheim--Skolem theorem through to \emph{monadic second-order} validity. The construction is a two-sorted elementary-substructure argument that takes
each set it needs as a witness from the model's own admissible collection~$E$
rather than forming it as an arbitrary subset, which keeps it in the general
class.

\paragraph{Use in teaching.}
The development also doubles as teaching material for graduate and even
undergraduate logic courses: because the shallow embeddings turn MSO formulae
into ordinary Isabelle/HOL goals, students can explore the difference between
the standard and the general reading hands-on, with \texttt{sledgehammer} and
\texttt{nitpick} --- a proof in one column, a countermodel in another.

\paragraph{Availability.}
The full Isabelle/HOL development is published as the Archive of Formal Proofs
entry \textsf{MSOinHOL}~\cite{MSOinHOL-AFP}: seventeen theory files (ten core, seven supplementary), loaded in dependency
order by the session \texttt{ROOT}; the published entry adds one further theory, \texttt{experiments\_extra}, not
needed here. The development has been checked in Isabelle2025-2. \procorext{}{An extended version with the full L\"owenheim--Skolem proof and the
theory sources is on arXiv~\cite{msoinhol-ext}.}

\paragraph{Disclosure on the use of generative AI.}
The authors used generative AI assistants (Anthropic's Claude family of models)
to draft and shorten prose, to propose and shorten some proofs, and to maintain
cross-references between the LaTeX source and the Isabelle/HOL files. The
mathematical content, the theory development, and the design choices are the
authors'; all text and proofs in the final manuscript have been verified by the
authors, who take full responsibility for the content.

\bibliographystyle{splncs04}
\bibliography{extra,chris,literature,bibliography}

%% The two appendices are part of the extended (arXiv) version only; in the
%% proceedings version (\extendedfalse) the body reroutes all references to them.
\ifextended
\clearpage
\appendix
\section{The converse of \textsf{FaithfulMS\_all}: range vs.\ general (Henkin-style) validity}\label{app:proof}

We prove Theorem~\ref{thm:rangehenkin}. Write $\models^d_r\varphi$ for
range-relative validity,
\[
  \models^d_r\varphi \ \equiv\ \forall I\,g\,G.\
  \langle I,\mathrm{Range}\,g,\mathrm{Range}\,G\rangle,g,G\models^d\varphi .
\]
The forward direction $\models^d\varphi\Rightarrow\;\models^d_r\varphi$ is immediate
(it is \textsf{Deep\_to\_MinS}: $g$ maps into $\mathrm{Range}\,g$ and $G$ into
$\mathrm{Range}\,G$, so both are admissible). We prove the converse.

\medskip\noindent\textbf{Setup.}
A model $\langle I,D,E\rangle$ is read as a two-sorted first-order structure over
the fixed individual type: sort~$1$ is $\{d : D\,d\}$, sort~$2$ is $\{S : E\,S\}$,
each relation symbol $r$ denotes the binary relation $I\,r$ on sort~$1$, and a
single membership relation is defined by $\mathrm{mem}(d,S)\Leftrightarrow S(d)$.
Under this reading $\langle I,D,E\rangle,g,G\models^d\varphi$ is ordinary
two-sorted FO satisfaction, with $\exists^d x$ ranging over sort~$1$ and
$\exists^d_2 X$ over sort~$2$; \emph{no} closure on $E$ is imposed.

\medskip\noindent\textbf{Coincidence (generalising \texttt{L12}/\texttt{N12}).}
By induction on $\varphi$: if $g,g'$ agree on every first-order variable free in
$\varphi$ and $G,G'$ agree on every second-order variable free in $\varphi$, then
$\langle I,D,E\rangle,g,G\models^d\varphi \Leftrightarrow
\langle I,D,E\rangle,g',G'\models^d\varphi$. The binder cases use that an update
$g[y\!\leftarrow\!d]$ (resp.\ $G\langle Y\!\leftarrow\!S\rangle$) preserves
agreement on the free variables of the body.

\begin{proof}
Converse of Theorem~\ref{thm:rangehenkin}, contrapositively. Suppose
$\not\models^d\varphi$. Then there are $I,D,E,g,G$ with $g$ into $D$, $G$ into
$E$, and $\langle I,D,E\rangle,g,G\not\models^d\varphi$.

\emph{Step~1 (downward L\"owenheim--Skolem).} Build sub-predicates
$D_0\sqsubseteq D$ and $E_0\sqsubseteq E$ by closing
$\mathrm{Range}\,g \cup \mathrm{Range}\,G$ under Tarski--Vaught witnesses for the
existential subformulae of $\varphi$: whenever a sort-$1$ existential has a
witness in $D$ for a parameter tuple drawn from the current stage, add one to
$D_0$; whenever a sort-$2$ existential has a witness $S\in E$, add that $S$ to
$E_0$. Iterating $\omega$ times yields $D_0,E_0$ that are \emph{countable} (the
language and the two ranges are countable, and each stage adds only countably
many witnesses) and elementary for $\varphi$. By the Tarski--Vaught test,
$\langle I,D_0,E_0\rangle,g,G\models^d\psi \Leftrightarrow
\langle I,D,E\rangle,g,G\models^d\psi$ for every subformula $\psi$ of $\varphi$;
in particular $\langle I,D_0,E_0\rangle,g,G\not\models^d\varphi$. The
set-witnesses are honest elements of $E$, so $E_0\sqsubseteq E$: \emph{no
comprehension is used}.

\emph{Step~2 (reindexing).} $D_0$ and $E_0$ are countable and nonempty (they
contain the nonempty ranges $\mathrm{Range}\,g$, $\mathrm{Range}\,G$). Since
$V=V_2=\mathit{nat}$ are infinite and only finitely many variables occur free in
$\varphi$, choose $g'$ agreeing with $g$ on the free first-order variables of
$\varphi$ with $\mathrm{Range}\,g'=D_0$, and $G'$ agreeing with $G$ on the free
second-order variables with $\mathrm{Range}\,G'=E_0$ --- fix the finitely many
free slots, and let the cofinitely many remaining slots enumerate the countable
target.

\emph{Step~3 (conclude).} By Coincidence,
$\langle I,D_0,E_0\rangle,g',G'\not\models^d\varphi$. But $D_0=\mathrm{Range}\,g'$
and $E_0=\mathrm{Range}\,G'$, hence
$\langle I,\mathrm{Range}\,g',\mathrm{Range}\,G'\rangle,g',G'\not\models^d\varphi$,
i.e.\ $\not\models^d_r\varphi$, as required.
\end{proof}

\medskip\noindent\textbf{Remark (the standard reading, exactly).}
The hull above is not only the engine of the converse; packaged with its
truth-preservation it is a self-contained downward L\"owenheim--Skolem theorem
\textsf{DownwardLowenheimSkolem} --- every countable nonempty seed sub-pair extends to a
countable \emph{elementary} substructure $\langle I,N,M\rangle\subseteq_{\!E}\langle I,D,E\rangle$.
Applied to the full model and combined with \textsf{FaithfulMD} and the surjective reindexing, it
yields the strongest faithfulness theorem \textsf{Deep'\_to\_MinS}:
$\models^{d'}\varphi$ iff $\models^m(\!\lvert\varphi\rvert\!)$ holds for every minimal interpretation
whose range model is an elementary substructure of the full model (locale \textsf{MinS\_ES\_Univ}).
The general reading is thus captured by the minimal embedding over \emph{all} interpretations, the standard reading over the
\emph{elementary} ones; the gap between them is witnessed by the comprehension instance below.

\medskip\noindent\textbf{Remark (why the standard reading is excluded by L\"owenheim--Skolem).}
The construction stays within the general class precisely because the sort-$2$
witnesses are taken from the given $E$. It cannot reach the standard reading
$\models^{d'}$ ($E=\mathcal P(D)$) by an \emph{all-interpretations} argument: by
\textsf{comprehension\_atom} the formula
$\exists X.\,\forall x.\,(X(x)\leftrightarrow P(x,x))$ is $\models^{d'}$-valid,
yet a general model whose $E$ omits the diagonal $\{d : P(d,d)\}$ refutes it, so
it is not $\models^d$-valid. Thus $\models^d\subsetneq\models^{d'}$, and no
range-relative L\"owenheim--Skolem argument can bridge the two; the elementary restriction of
\textsf{Deep'\_to\_MinS} cannot be relaxed to all interpretations. A structured Isabelle
development of the argument above is provided as
\texttt{MSOinHOL\_\allowbreak lowenheim\_\allowbreak skolem.thy}: the coincidence
and reindexing lemmas, the two-sorted hull \textsf{ls\_hull} (with its countable
Tarski--Vaught construction \textsf{skolem\_hull} and the truth preservation
\textsf{truth\_pres} it yields), the assembled converse, the elementary-substructure relation
$\subseteq_{\!E}$ with \textsf{DownwardLowenheimSkolem} and \textsf{Deep'\_to\_MinS}, and this
non-reduction remark are written out in full and machine-checked under Isabelle2025-2.

\section{Rendered Isabelle/HOL sources}\label{app:src}
For completeness every theory file of the development is rendered here as a
syntax-highlighted source figure (the five theories already presented in the
body --- the deep embedding, the maximal- and minimal-shallow embeddings,
faithfulness, and the experiments --- are not repeated here).
Long theories are split into consecutive parts of at most eighty lines; the
line-number gutter is continuous, so the parts read as one listing. The full
session (file \texttt{ROOT}) loads all theories in dependency order.

\thyfigH{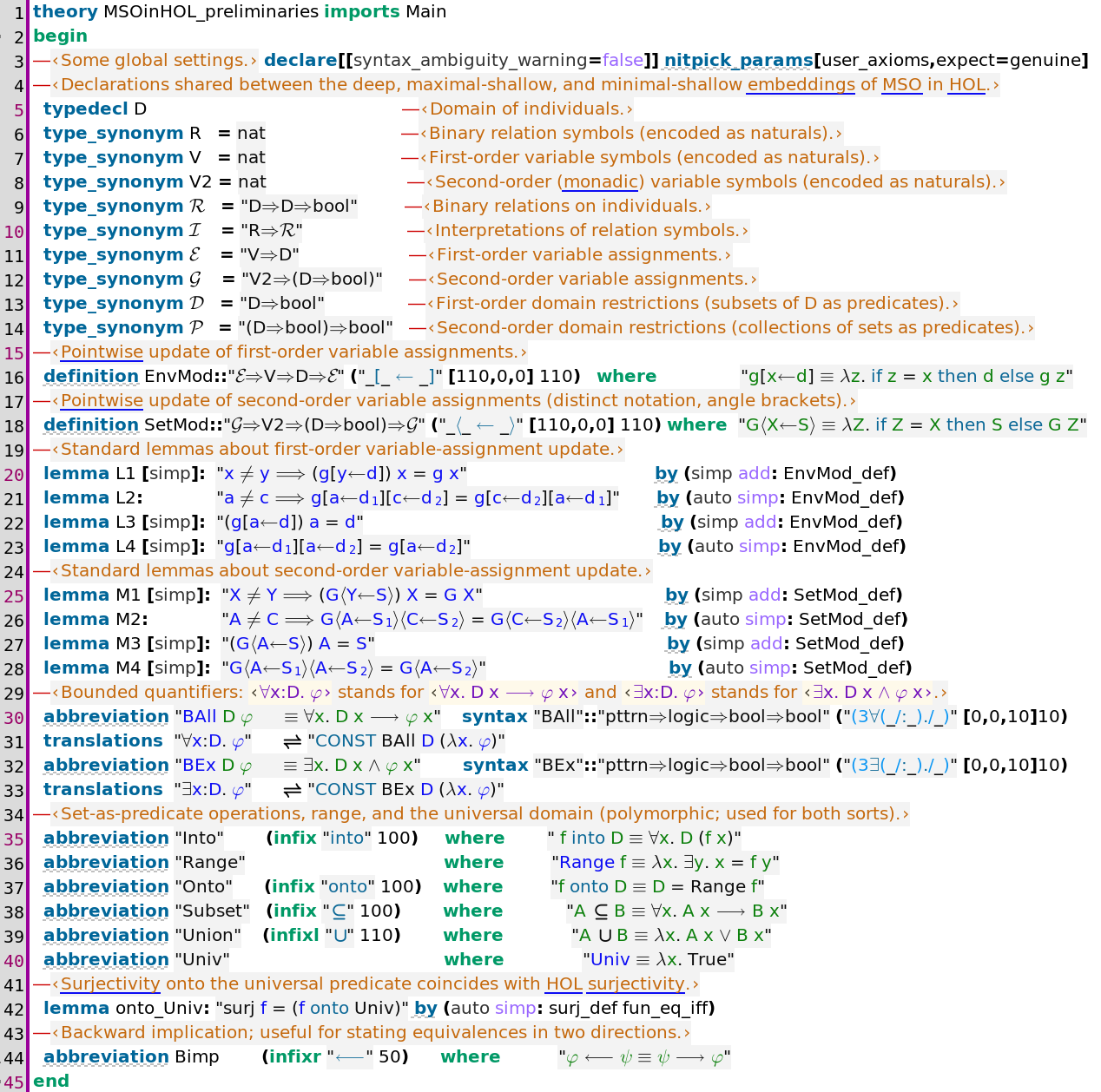}{Theory \texttt{MSOinHOL\_preliminaries.thy}: the
shared type declarations, the two variable-assignment update operators
$g[x{\leftarrow}d]$ and $G\langle X{\leftarrow}S\rangle$ with their companion
lemmata, the bounded quantifiers, and the polymorphic set-as-predicate
operations.}

\thyfigH{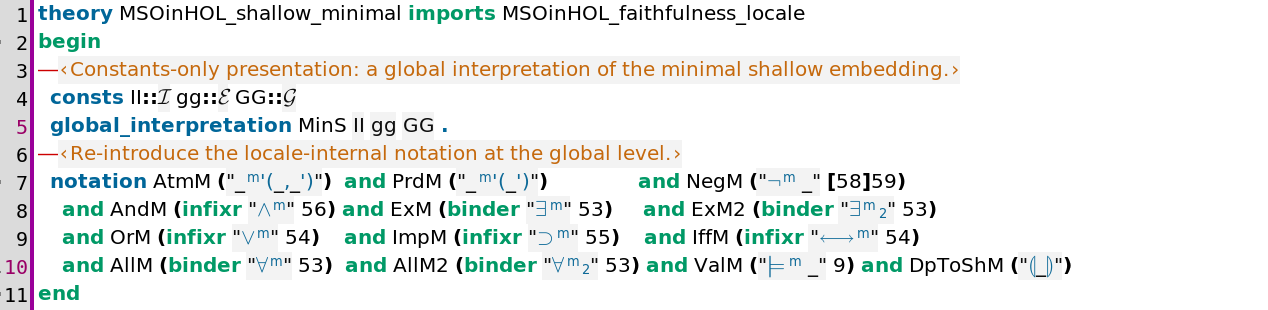}{Theory \texttt{MSOinHOL\_shallow\_minimal.thy}:
the constants-only \texttt{global\_interpretation} of \texttt{MinS} that re-issues
the locale notation at the level of three uninterpreted constants
\texttt{II},\texttt{gg},\texttt{GG}.}

\thyfigH{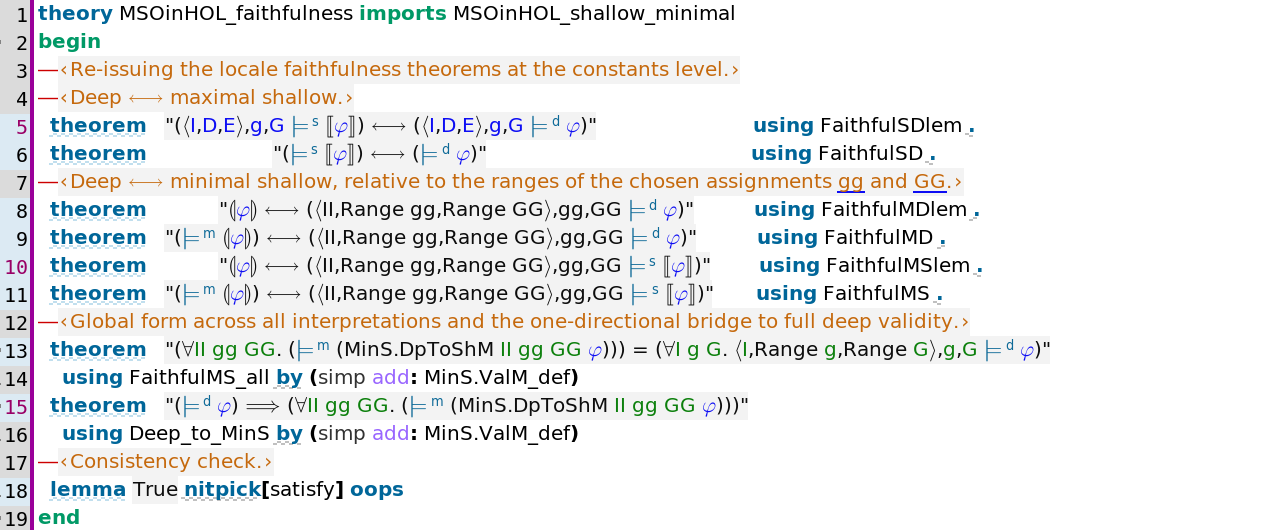}{Theory \texttt{MSOinHOL\_faithfulness.thy}: the
locale faithfulness theorems re-issued at the constants level (deep$\leftrightarrow$maximal-
and deep$\leftrightarrow$minimal-shallow correspondences, the all-interpretations theorem, and the
one-directional bridge to deep validity).}

\thyfigpartH{MSOinHOL_deep_subst_lemma}{1}{Theory
\texttt{MSOinHOL\_deep\_subst\_lemma.thy} (Part~1 of 3): Part~A (first-order
variables) --- \texttt{is\_free}, \texttt{is\_bound}, \texttt{fresh} with their
consequence lemmata, the irrelevance lemma \texttt{L12}, variable-for-variable
substitution, and the size-based induction principles
\texttt{SInduct}/\texttt{QInduct}.}

\thyfigpartH{MSOinHOL_deep_subst_lemma}{2}{Theory
\texttt{MSOinHOL\_deep\_subst\_lemma.thy} (Part~2 of 3): the substitutability
predicate, \textsf{SubstitutionLemma}, alphabetic renaming
\texttt{ren\_for\_subst}, safe substitution \texttt{ren\_subst}, and
\texttt{L17}--\texttt{L29} (Part~A concluded); then the start of Part~B
(second-order variables) --- \texttt{is\_free2}, \texttt{is\_bound2},
\texttt{fresh2}, and \texttt{N5}--\texttt{N11}.}

\thyfigpartH{MSOinHOL_deep_subst_lemma}{3}{Theory
\texttt{MSOinHOL\_deep\_subst\_lemma.thy} (Part~3 of 3): Part~B (second-order
variables) concluded --- the irrelevance lemma
\texttt{N12}, second-order substitution \texttt{Subst2},
\textsf{SubstitutionLemma2}, renaming
\texttt{ren\_for\_subst2}/\texttt{ren\_subst2}, and \texttt{N17}--\texttt{N29}.}

\thyfigH{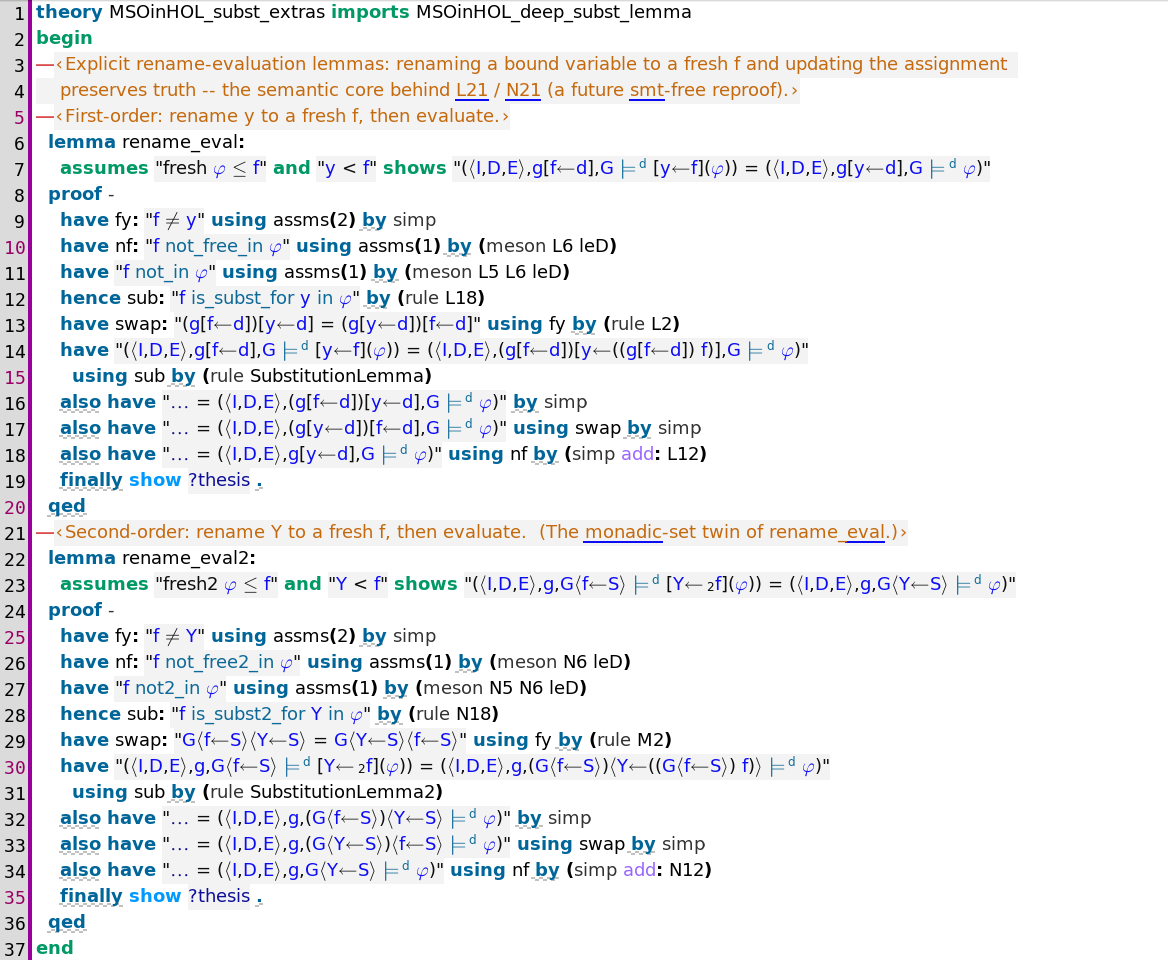}{Theory \texttt{MSOinHOL\_subst\_extras.thy}: the
explicit rename-evaluation lemmas \textsf{rename\_eval} (first-order) and
\textsf{rename\_eval2} (second-order).}

\thyfigH{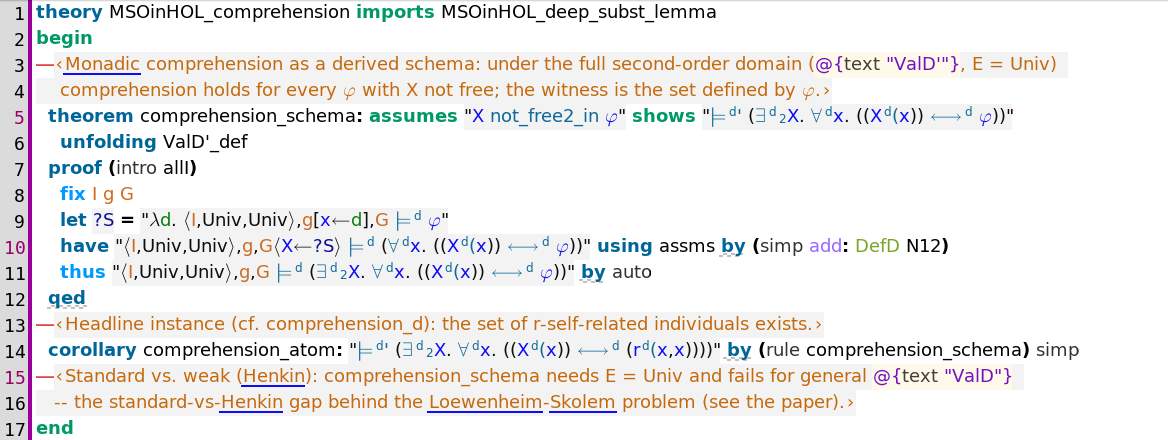}{Theory \texttt{MSOinHOL\_comprehension.thy}: the
general monadic \textsf{comprehension\_schema} over the full second-order domain
with its witness, the corollary \textsf{comprehension\_atom}, and the discussion
relating comprehension to standard vs.\ general (Henkin-style) semantics.}

\thyfigH{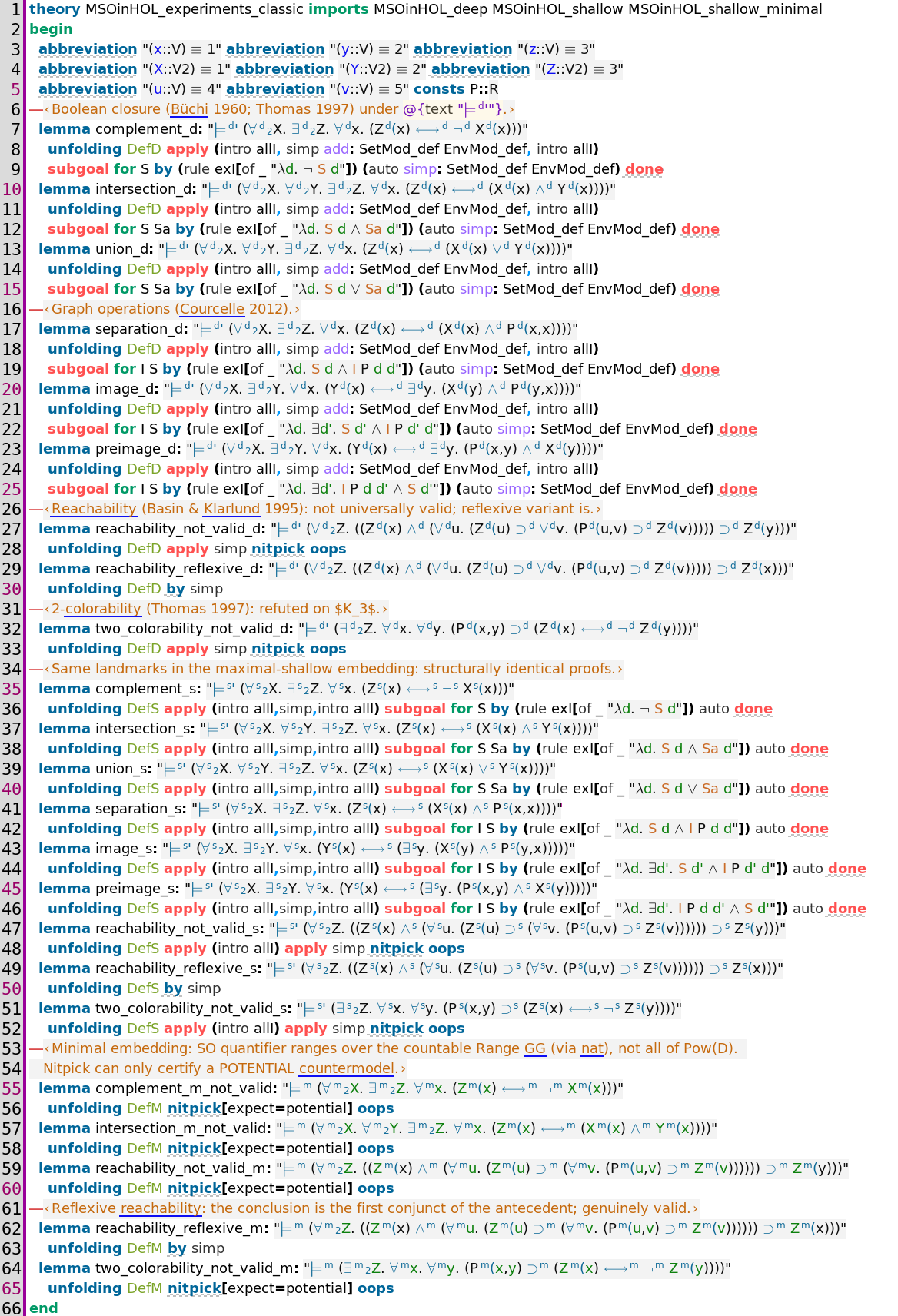}{Theory
\texttt{MSOinHOL\_experiments\_classic.thy}: the classical MSO landmarks in the
deep and maximal-shallow embeddings --- the six closure schemata (complement,
intersection, union, separation, $P$-image, $P$-preimage), each with an explicit
second-order witness via \texttt{exI}, and reachability and $2$-colorability,
refuted by \texttt{nitpick}; and the minimal embedding, where \texttt{nitpick} reports (potential) countermodels
to Boolean closure, reachability, and $2$-colorability while reflexive reachability
remains valid.}

\thyfigH{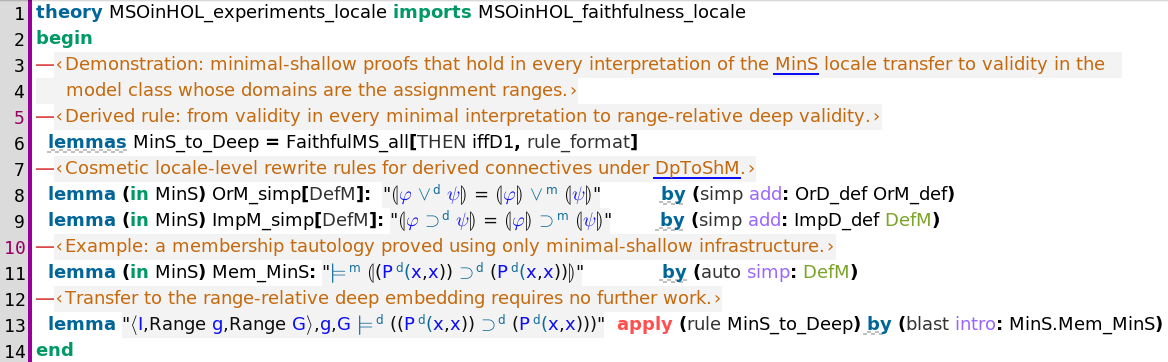}{Theory
\texttt{MSOinHOL\_experiments\_locale.thy}: locale-based transfer to the deep
embedding --- the directional reading \texttt{MinS\_to\_Deep} of
\textsf{FaithfulMS\_all}, two cosmetic rewrite rules, a schema proved inside the
locale, and its one-line transfer.}

\thyfigpartH{MSOinHOL_lowenheim_skolem_lemmas}{1}{Theory
\texttt{MSOinHOL\_lowenheim\_skolem\_lemmas.thy} (Part~1 of 4): the supporting
machinery for the L\"owenheim--Skolem construction --- the coincidence lemma
\textsf{coincidence} (a two-sorted generalisation of \texttt{L12}/\texttt{N12}),
the padding maps \textsf{padA}/\textsf{padB}, the Skolem-witness operators
\textsf{wF}/\textsf{wS} with their membership lemmas, the $\omega$-stage
operator \textsf{stage}, the stage-introduction lemmas
\textsf{fst\_stage\_Suc\_intro}/\textsf{snd\_stage\_Suc\_intro}, and the
padding-truth lemmas \textsf{pad\_truth\_eq\_FO}/\textsf{pad\_truth\_eq\_SO}.}

\thyfigpartH{MSOinHOL_lowenheim_skolem_lemmas}{2}{Theory
\texttt{MSOinHOL\_lowenheim\_skolem\_lemmas.thy} (Part~2 of 4): the
witness-realisation lemmas
\textsf{wF\_realizes}/\textsf{wS\_realizes}, the per-step Skolem closure
\textsf{skolem\_step\_FO}/\textsf{skolem\_step\_SO}, the subset bookkeeping
\textsf{fst\_stage\_Suc\_subset}/\textsf{snd\_stage\_Suc\_subset}, the stage
monotonicity \textsf{fst\_stage\_mono}/\textsf{snd\_stage\_mono}, and the
common-stage lemma \textsf{common\_stage\_lifted}.}

\thyfigpartH{MSOinHOL_lowenheim_skolem_lemmas}{3}{Theory
\texttt{MSOinHOL\_lowenheim\_skolem\_lemmas.thy} (Part~3 of 4): the
$\omega$-union membership lemmas
\textsf{wF\_in\_omega\_union}/\textsf{wS\_in\_omega\_union}, the first- and
second-order Tarski--Vaught closures
\textsf{stage\_TV\_FO}/\textsf{stage\_TV\_SO}, the inductive countability step
\textsf{cntDE\_step}, the carriers-stay-inside lemmas
\textsf{fst\_stage\_subD}/\textsf{snd\_stage\_subE}, stage-wise and
$\omega$-countability \textsf{stage\_countable}/\textsf{stage\_omega\_countable},
and the surjective reindexing \textsf{reindex\_one}.}

\thyfigpartH{MSOinHOL_lowenheim_skolem_lemmas}{4}{Theory
\texttt{MSOinHOL\_lowenheim\_skolem\_lemmas.thy} (Part~4 of 4): the conclusion of
\textsf{reindex\_one}, the surjective reindexing \textsf{reindex}, and the
combined \textsf{reindex\_coincide} lemma.}

\thyfigpartH{MSOinHOL_lowenheim_skolem}{1}{Theory
\texttt{MSOinHOL\_lowenheim\_skolem.thy} (Part~1 of 2): the elementary-substructure
layer --- the relation \textsf{ElementarySubstructure} ($\subseteq_{\!E}$) and the locales
\textsf{MinS\_ES}/\textsf{MinS\_ES\_Univ} --- range-relative validity \textsf{RangeValid},
the easy direction \textsf{ValD\_imp\_RangeValid}, truth preservation \textsf{truth\_pres}
for a Tarski--Vaught-closed sub-pair, the countable seed hull \textsf{skolem\_hull},
and the first-class downward L\"owenheim--Skolem theorem \textsf{DownwardLowenheimSkolem}.}

\thyfigpartH{MSOinHOL_lowenheim_skolem}{2}{Theory
\texttt{MSOinHOL\_lowenheim\_skolem.thy} (Part~2 of 2): the strongest faithfulness theorem
\textsf{Deep'\_to\_MinS} (standard validity $\Leftrightarrow$ minimal validity over elementary
interpretations) with its corollary \textsf{Faithful\_to\_Standard}; then the earlier
general-reading route, derived in one step from \textsf{DownwardLowenheimSkolem} ---
the assembled elementary-substructure hull \textsf{ls\_hull}, the range reduction
\textsf{Range\_reduction}, the main theorem \textsf{RangeValid\_imp\_ValD} with
corollaries \textsf{RangeValid\_iff\_ValD}/\textsf{Faithful\_to\_Henkin}, and the
negative result \textsf{Standard\_strictly\_stronger} with its explicit diagonal
countermodel.}

\thyfigH{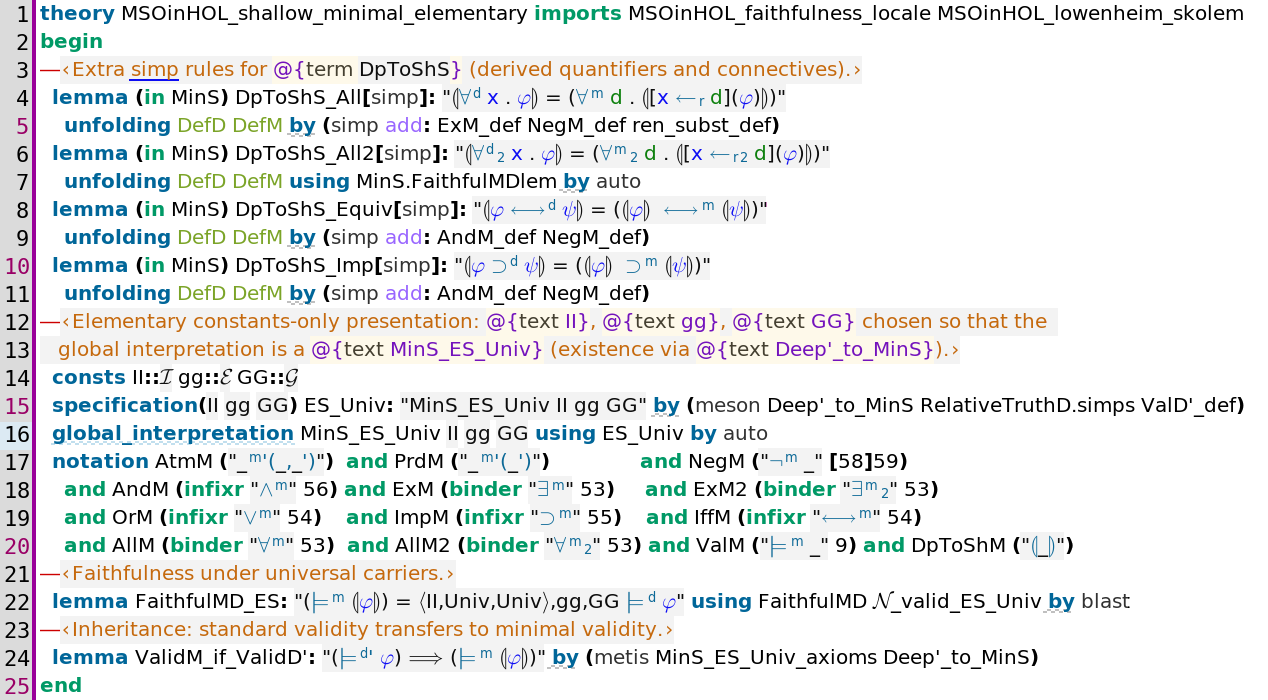}{Theory
\texttt{MSOinHOL\_shallow\_minimal\_elementary.thy}: an \emph{elementary}
constants-only presentation of the minimal shallow embedding --- four simplification lemmas for the translation $(\!\lvert\cdot\rvert\!)$ on the
derived quantifiers and connectives (named \texttt{DpToShS\_All},
\texttt{DpToShS\_All2}, \texttt{DpToShS\_Equiv}, \texttt{DpToShS\_Imp}, although
they concern \texttt{DpToShM}), the three uninterpreted constants
\texttt{II}, \texttt{gg}, \texttt{GG}, a \texttt{specification} establishing the
existence of a global interpretation that forms a \texttt{MinS\_ES\_Univ}
elementary substructure (with existence justified by
\textsf{Deep'\_to\_MinS}), the corresponding \texttt{global\_interpretation}
re-issuing the locale notation at the constants level, and the inheritance
lemmas \texttt{FaithfulMD\_ES} (faithfulness under universal carriers) and
\texttt{ValidM\_if\_ValidD'} (standard validity lifts to minimal validity).}

\thyfigpartH{MSOinHOL_experiments_classic_elementary}{1}{Theory
\texttt{MSOinHOL\_experiments\_classic\_elementary.thy} (Part~1 of 2): additional
simplification rules for the derived second-order quantifier and equivalence
(\texttt{ren\_for\_subst2\_simp\_All}, \texttt{subst2\_all},
\texttt{free2\_in\_equiv}, \texttt{free2\_in\_all}, \texttt{free2\_in\_all2}),
the convenient variable abbreviations, and the full set of deep-embedding
classical MSO landmarks --- Boolean closure (\texttt{complement\_d},
\texttt{intersection\_d}/\texttt{intersection\_d'}, \texttt{union\_d}),
monadic graph operations (\texttt{separation\_d}, \texttt{image\_d},
\texttt{preimage\_d}), reachability (\texttt{reachability\_not\_valid\_d},
\texttt{reachability\_reflexive\_d}), and $2$-colorability
(\texttt{two\_colorability\_not\_valid\_d}).}

\thyfigpartH{MSOinHOL_experiments_classic_elementary}{2}{Theory
\texttt{MSOinHOL\_experiments\_classic\_elementary.thy} (Part~2 of 2): the
same landmarks in the maximal-shallow embedding (the deep$\leftrightarrow$maximal-shallow scripts are structurally identical,
differing only in the unfolded assignment-update definitions), and their lifts to the elementary global
interpretation --- the $(\!\lvert\cdot\rvert\!)$-normal-form lemmas (\texttt{compl\_m\_DpToSh},
\texttt{inters\_m\_DpToSh'}, \texttt{intersect\_m\_DpToSh}), the helper
combinators (\texttt{DpToSh\_alleqI}, \texttt{all\_m\_eqI}, \texttt{DpToSh\_simps}),
and the \texttt{ValidM\_if\_ValidD'}-style validity transfers
(\texttt{compl\_m\_valid}, \texttt{intersection\_m\_valid}); reachability
and $2$-colorability are recorded as open goals (\texttt{oops}): for the
underspecified elementary interpretation their failure is not determined, in
contrast to the general minimal embedding (the in-file comment ``remain
refuted'' predates this analysis).}

\fi % end of extended-only appendices

\end{document}